\theoremstyle{plain}
\newtheorem{teo}{Theorem}[section]
\newtheorem{thm}[teo]{Theorem}
\newtheorem{cor}[teo]{Corollary}
\newtheorem{lem}[teo]{Lemma}
\newtheorem{prop}[teo]{Proposition}		
\theoremstyle{definition}
\newtheorem{rmk}[teo]{Remark}
\DeclareMathOperator{\card}{card}
\DeclareMathOperator{\diam}{diam}
\newcommand{\Spo}{{P_0}}
\newcommand{\R}{\mathbb R}
\renewcommand{\epsilon}{\varepsilon}
\begin{document}

\author[A. Artigue]{Alfonso Artigue}
\address{Departamento de Matem\'atica y Estad\'\i stica del Litoral\\
Universidad de la Rep\'ublica\\
Gral. Rivera 1350, Salto, Uruguay.}

\email{artigue@unorte.edu.uy}

\title{Billiards and toy gravitons}
\date{\today}
\begin{abstract} 
In this article we study the one-dimensional dynamics of elastic collisions of particles 
with positive and negative mass. 
We show that such systems are equivalent to billiards induced by an inner product of possibly indefinite signature, 
we characterize the systems with finitely many collisions and  
we prove that a small particle of negative mass between two particles of positive mass 
acts like an attracting particle with discrete acceleration (at the collisions) 
provided that the total kinetic energy is negative. 
In the limit of the negative mass going to zero, with fixed negative kinetic energy, we obtain a continuous acceleration 
with potential energy of the form $U(r)=-k/r^2$.
\end{abstract}

\keywords{billiard, negative mass, elastic collision}
\maketitle

\section{Introduction}

The dynamics of one ball in a triangular billiard table is known to be equivalent to 
the one-dimensional motion of
three particles on a circle with elastic collisions.
This result, that was proved by Glashow and Mittag \cite{GlMi},
states that obtuse triangles correspond to three particles with masses of different sign. 
Indeed, three particles with masses $m_1,m_2,m_3$ give rise to a triangular billiard if and only if
\begin{equation}
 \label{ecuGM}
 (m_1+m_2+m_3)m_1m_2m_3>0. 
\end{equation}
Under this condition, if the masses have the same sign then the associated triangle is acute, otherwise it is obtuse.
Notice that \eqref{ecuGM} is satisfied if, for instance, 
$m_1$ and $m_3$ are positive and $m_2$ is negative with large absolute value.
The dynamics of triangular billiards is far from trivial.
For example, it is not known whether every obtuse triangular billiard table has periodic trajectories or not. 
The answer is known to be affirmative in some cases, we suggest \cite{Sch08} for relevant results on the triangular billiard problem and
\cite{ChMa} for the general theory of billiards.

Therefore, the obtuse triangular tables, which corresponds to the difficult case of the billiard problem, 
lead us to consider the one-dimensional dynamics of particles with positive and negative mass. 
From a mathematical viewpoint, the relevance of studying the dynamics of negative masses is supported by the obtuse triangular billiard problem.
From a physical viewpoint it is less clear which role negative mass can play, for instance, the mass of the known physical particles 
is zero or positive. 
The literature concerning negative mass, known to the author, is mainly dedicated to gravity, 
see for example \cites{Lut51,Bon57,Bonnor89} (just to name some classics). 
It is possible that we never find negative mass in our universe, 
but we all deal with negative energy, namely, the gravitational potential energy.
In Theorem \ref{thmGravLim} we will show that a negative potential energy of the form $U(r)=-k/r^2$ is the limit kinetic 
energy of a particle of negative mass.

We prove that if a system does not satisfy \eqref{ecuGM} then it is equivalent to a billiard 
defined by an indefinite inner product. 
The \emph{mirror law} for this case is explained in \S\ref{secGeomModel}.
For instance, if we consider a tiny particle $m_2<0$ (tiny in absolute value) 
between $m_1,m_3>0$
we obtain 
what in this paper will be called a \emph{toy graviton}. 
In \S\ref{secColSys} it will be proved that if the velocity of $m_2$ is sufficiently large (giving negative total kinetic energy), then the one-dimensional 
system of three particles will collapse in finite time at the center of mass. 
In some sense, it is a discrete model of an attracting force, where the particles are accelerated only at collisions.

If we wish a continuous attraction, we can take the limit $m_2\to 0$ but with a fixed negative kinetic energy. 
This implies that its velocity diverges. Therefore, in the limit, 
the toy graviton has not a definite position and it is transformed into \emph{pure negative kinetic energy}, but without mass. 
As we said, in our one-dimensional model we obtain a potential energy of the form 
$U=-k/r^2$, which obviously differs from Newtonian potential in the exponent of $r$. 
We remark that in the paper we only consider classical mechanics. It would be interesting to 
know whether considering relativistic collisions or quantum mechanical particles in a higher dimensional space 
could give rise to a realistic
model of gravity.

Let us explain the contents of this article.
In \S\ref{secPartLine} we define the dynamics, we study the signature of the kinetic energy 
and some properties of collisions between particles of masses of different sign are explained. 
In \S\ref{secGeomModel} we consider billiards defined by a non-necessarily positive inner product. 
We show the equivalence between one-dimensional motions of a systems of particles and an associated billiard. 
The evolution of the moment of inertia of the system is studied.
In \S\ref{secColSys} we 
prove that a system has collapsing solutions if and only if the kinetic energy is indefinite. 
We say that a solution collapses if it has infinitely many collisions in finite time. For this 
purpose we introduce two key systems: toy gravitons \S\ref{secToyG} and compressors \S\ref{secCompr}. 
We show that for a system of $N\geq 3$ particles on the line the following statements are equivalent: 
\begin{itemize}
 \item the kinetic energy form restricted to configurations with vanishing momentum has a definite sign,
 \item every solution has finitely many collisions and the total mass is not zero, 
 \item all the masses have the same sign or there is exactly one mass with the sign of the total mass.
\end{itemize}
The first equivalence is given in Corollary \ref{corSgBrS1}. 
The second equivalence is proved in Theorem \ref{thmSgBrS}, it depends on the study of the dynamics of toy gravitos and compressors.
Finally, in \S\ref{secGravity} we prove Theorem \ref{thmGravLim}, that was mentioned above.

\section{Particles on a line}

We start defining the dynamical system that will be consider in this paper. 
Also, we recall some fact concerning quadratic forms and give an elementary result about maximal solutions.

\label{secPartLine}
\subsection{Preliminaries} 
Consider $N$ particles in the real line $\R$. 
The positions and the masses of these particles will be denoted as 
$x_i(t),m_i\in\R$, $m_i\neq 0$, for all $i=1,\dots,N$.
Some restrictions will be imposed on the values of the masses 
but we allow them to be positive and negative. 
Define the total mass as 
$$M=m_1+\dots+m_N$$ 
and denote by $v_i(t)$ the velocity of the $i$-particle. 
We will assume the conservation of momentum and kinetic energy, so let us explain some basic properties of these quantities.

The momentum of a solution is defined as $P=\sum_{i=1}^ Nm_iv_i$. 
If $M\neq 0$ then we can define new coordinates $y_i=x_i-t P/M$. 
With respect to these coordinates the system has the center of mass 
$\sum_{i=1}^ N m_ix_i/M$ fixed at the origin and $P=0$. 
In this case every solution $x(t)$ is in the subspace 
\begin{equation}
 \label{ecuSMomento}
 \Spo=\left\{(x_1,\dots,x_N)\in\R^N:\sum_{i=1}^N m_i x_i=0\right\}.
\end{equation}
Thus, for $M\neq 0$ we do not lose generality assuming that the center of mass if fixed at the origin.
It is remarkable that for $M=0$, the momentum of a solution does not depend on the inertial reference frame. 

\subsection{Quadratic forms}
Let us fix some notation and recall some results concerning quadratic forms that will be used. 
See \cite{HoKu} for the proofs and more information.
Let $V$ be an $n$-dimensional real vector space and let $B\colon V\times V\to\R$ be a symmetric bilinear form. 
Consider the quadratic form $Q\colon V\to\R$ defined as $Q(x)=B(x,x)$. 
We say that $B$ is \emph{degenerate} if there is $x\in V$, $x\neq 0$, such that $B(x,y)=0$ for all $y\in V$.
A basis $\{x_1,\dots,x_N\}$ of $V$ is $B$-\emph{orthogonal} if $B(x_i,x_j)=0$ for all $i\neq j$. 

If $B$ is non-degenerate then there are non-negative 
integers $p,q$ such that $n=p+q$ and
for every $B$-orthogonal basis $\{x_1,\dots,x_n\}$ of $V$ it holds that
$p=\card\{i\in\{1,\dots,N\}:Q(x_i)>0\}$ and
$q=\card\{i\in\{1,\dots,N\}:Q(x_i)<0\}$, where $\card$ denotes the cardinality of the set.
The \emph{signature} of $B$ is the ordered pair $(p,q)$.
If $(p,q)=(N,0)$ (resp. $(0,N)$) we say that $B$ is \emph{positive (resp. negative) definite}.
In any of these cases we say that $B$ is \emph{definite}, otherwise it is \emph{indefinite}. 

The Cauchy-Schwarz inequality is well known for positive (and negative) forms. 
We give a reference of the reversed inequality for indefinite forms.

\begin{lem}[\cite{Sch97}*{p. 185}]
\label{lemCS}
 If $B$ is a symmetric bilinear form which is indefinite in a plane $V$ then 
 $[B(x,y)]^ 2\geq Q(x)Q(y)$ for all $x,y\in V$.
\end{lem}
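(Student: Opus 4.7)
The plan is to reduce the inequality to a short algebraic identity after diagonalizing $B$ on the plane $V$. Since $B$ is indefinite on $V$ and, in the non-degenerate case, must have signature $(1,1)$, the diagonalization fact recalled just above the lemma provides a $B$-orthogonal basis $\{e_1,e_2\}$ of $V$ with $Q(e_1)=\alpha>0$ and $Q(e_2)=-\beta<0$. This is the one non-computational input; the rest is bookkeeping.

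I would then write $x=a_1e_1+a_2e_2$ and $y=b_1e_1+b_2e_2$ and use bilinearity together with $B(e_1,e_2)=0$ to read off
\[
Q(x)=\alpha a_1^2-\beta a_2^2,\qquad Q(y)=\alpha b_1^2-\beta b_2^2,\qquad B(x,y)=\alpha a_1b_1-\beta a_2b_2.
\]
A direct expansion is expected to collapse the relevant difference into a single square, giving the key identity
\[
B(x,y)^2-Q(x)Q(y)=\alpha\beta\,(a_1b_2-a_2b_1)^2,
\]
whose right-hand side is non-negative because $\alpha,\beta>0$. This yields the claimed reversed Cauchy--Schwarz inequality. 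Conceptually, this is the familiar fact that the Gram determinant $Q(x)Q(y)-B(x,y)^2$ flips sign relative to the positive definite case once $\det B$ becomes negative.

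If one reads \emph{indefinite} so as to also cover forms that are degenerate on $V$, a quick rank check is needed. On a plane such a form has rank $0$ or $1$. The zero form makes both sides vanish. A rank-one symmetric bilinear form takes the shape $B(u,v)=\epsilon\,\phi(u)\phi(v)$ for some linear functional $\phi$ on $V$ and $\epsilon\in\{+1,-1\}$, so $B(x,y)^2=\phi(x)^2\phi(y)^2=Q(x)Q(y)$, with equality. There is no substantive obstacle in the argument: the expansion in the signature-$(1,1)$ case is essentially a one-line computation, and the only non-routine ingredient is the existence of a $B$-orthogonal basis adapted to the signature, which is exactly the statement borrowed from the result recalled just before the lemma.
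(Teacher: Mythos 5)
Your argument is correct. Note first that the paper does not prove this lemma at all: it is quoted from Schutz's book (the citation to p.~185) and used as a black box, so there is no internal proof to compare against. Your computation is a complete and elementary substitute. The key identity checks out: with $Q(x)=\alpha a_1^2-\beta a_2^2$, $Q(y)=\alpha b_1^2-\beta b_2^2$ and $B(x,y)=\alpha a_1b_1-\beta a_2b_2$, expanding gives
\[
B(x,y)^2-Q(x)Q(y)=\alpha\beta\bigl(a_1b_2-a_2b_1\bigr)^2\geq 0,
\]
exactly as you claim, and the only input is the existence of a $B$-orthogonal basis realizing the signature $(1,1)$, which is the diagonalization fact the paper recalls just before the lemma. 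One small remark: your caution about the degenerate case is harmless but unnecessary under the usual reading of \emph{indefinite} (the form takes both strictly positive and strictly negative values on $V$), since a rank-$\leq 1$ symmetric form on a plane is semi-definite; still, your rank-one analysis showing equality there is correct, so nothing breaks either way. This is also the context in which the paper applies the lemma (Theorem \ref{thmCritGralColapso} invokes it precisely on a plane where $B$ has signature $(1,1)$), so your proof covers every use the paper makes of it.
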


\subsection{Kinetic energy}
\label{secKinEn}
Given the masses $m_1,\dots,m_N$ define $B\colon \R^N\times \R^N\to \R$ 
as 
$$B(y,z)=\sum_{i=1}^N y_im_iz_i$$
where $y=(y_1,\dots,y_N)$ and $z=(z_1,\dots,z_N)$. 
We consider the quadratic form $Q(y)=B(y,y)$.
The kinetic energy is defined as 
$$E=\sum_{i=1}^ Nm_i\dfrac{v_i^ 2}2=\frac12Q(v).$$ 
Let $\{e_1,\dots,e_N\}$ be the standard basis of $\R^ N$. 
Notice that it is $B$-orthogonal and that the signature of $B$ is $(p,q)$ 
where $p,q$ are number of positive and negative masses respectively.
Define 
$e_{1N}=e_1+\dots+e_N=(1,\dots,1)$.
The restriction of $B$ to vectors of $\Spo$ (defined in \eqref{ecuSMomento}) will be denoted as $B|_\Spo$. 
\begin{prop}
\label{propSignPosta}
 For a system of $N\geq 3$ particles,
 if there are $p$ positive masses and $q$ negative masses then 
 \begin{itemize}
  \item if $M=0$ then $B|_\Spo$ is degenerate,
  \item if $M>0$ then the signature of $B|_\Spo$ is $(p-1,q)$,
  \item if $M<0$ then the signature of $B|_\Spo$ is $(p,q-1)$.
 \end{itemize}
\end{prop}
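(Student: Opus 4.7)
The plan is to recognize that $P_0$ is precisely the $B$-orthogonal complement of the all-ones vector $e_{1N}$, and then to compute the signature of $B$ on this complement by splitting off the one-dimensional subspace spanned by $e_{1N}$.

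First I would observe that
\[
\sum_{i=1}^N m_i x_i = B(e_{1N}, x),
\]
so $P_0 = \{x \in \R^N : B(e_{1N},x)=0\}$ is the $B$-orthogonal complement of $\mathrm{span}(e_{1N})$. A direct computation gives $Q(e_{1N}) = B(e_{1N},e_{1N}) = \sum_i m_i = M$, so the sign of $Q(e_{1N})$ is the sign of the total mass.

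Next I would handle the three cases. If $M=0$, then $e_{1N}$ itself lies in $P_0$ (because $B(e_{1N},e_{1N})=0$), while by definition of $P_0$ we have $B(e_{1N},y)=0$ for every $y \in P_0$. Thus $e_{1N}$ is a nonzero vector of $P_0$ that is $B$-orthogonal to all of $P_0$, showing $B|_{P_0}$ is degenerate. If $M \neq 0$, then $e_{1N} \notin P_0$ and we get the $B$-orthogonal decomposition
\[
\R^N = \R\, e_{1N} \oplus P_0,
\]
where the dimensions match ($1 + (N-1) = N$) and the two summands intersect only at $0$. Since the standard basis $\{e_1,\dots,e_N\}$ is $B$-orthogonal with $Q(e_i)=m_i$, the form $B$ on $\R^N$ has signature $(p,q)$; and because signatures add along $B$-orthogonal direct sums, the signature of $B|_{P_0}$ is $(p,q)$ minus the signature contribution of $\R\,e_{1N}$. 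When $M>0$ the latter contributes $(1,0)$, giving $(p-1,q)$; when $M<0$ it contributes $(0,1)$, giving $(p,q-1)$.

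The only mild subtlety is justifying the additivity of signatures along the decomposition $\R^N = \R e_{1N} \oplus P_0$: one extends any $B$-orthogonal basis of $P_0$ (obtained, for example, by Gram–Schmidt on $B|_{P_0}$, which works even if it has a kernel in the degenerate case — but here we are in the nondegenerate case $M\neq 0$) by $e_{1N}$ to get a $B$-orthogonal basis of $\R^N$, and then reads off the signature from the signs of $Q$ on that basis, invoking the invariance of signature recalled earlier in the section. This should be the entire argument; I do not foresee any real obstacle, since the proposition is essentially a direct application of the orthogonal-complement description of $P_0$.
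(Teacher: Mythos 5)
Your proposal is correct and follows essentially the same route as the paper: identify $\Spo$ as the $B$-orthogonal complement of $e_{1N}$, note $Q(e_{1N})=M$, handle $M=0$ by degeneracy, and for $M\neq 0$ extend a $B$-orthogonal basis of $\Spo$ by $e_{1N}$ and invoke the invariance of the signature. Your write-up is in fact slightly more explicit than the paper's about the orthogonal direct-sum decomposition and the additivity of signatures, but there is no substantive difference.
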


\begin{proof}
Notice that $\Spo=\{x\in\R^N:B(x,e_{1N})=0\}$, that is, $\Spo$ is the $B$-orthogonal complement of the vector $e_{1N}$. 
If $M=0$ then $e_{1N}\in \Spo$ and we have that $B|_\Spo$ is degenerate.
Let $y_1,\dots,y_{N-1}$ be a $B$-orthogonal basis of $\Spo$.
Since the signature of $B$ does not depend on the basis, 
if $Q(e_{1N})=M>0$ then the signature of $B|_\Spo$ is
$(j-1,k)$. 
The case $M<0$ is analogous.
\end{proof}

A fundamental property of systems of positive mass is that the energy is positive definite.
The next result implies that this property is not always lost under the presence of masses of different sign, 
at least for the restriction to $\Spo$, which is the relevant case if $M\neq 0$.

\begin{cor}
\label{corSgBrS1}
For a system of $N\geq 3$ particles on the line, 
$B|_\Spo$  
is definite 
if and only if
one of the following conditions hold: 
 \begin{itemize}
  \item all the masses have the same sign,
  \item there is exactly one mass with the sign of $M$.
 \end{itemize}
\end{cor}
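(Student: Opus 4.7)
The plan is to read off the equivalence directly from Proposition \ref{propSignPosta}, organizing the argument as a case analysis on the sign of $M$. A form of signature $(r,s)$ on an $(N{-}1)$-dimensional space is definite precisely when $r=0$ or $s=0$, so I would translate each of the three conclusions of Proposition \ref{propSignPosta} into this language.

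First I would dispose of $M=0$: Proposition \ref{propSignPosta} says $B|_\Spo$ is degenerate in this case, hence not definite. Conversely, both hypotheses in the corollary force $M\neq 0$ (the phrase ``the sign of $M$'' presupposes $M\neq 0$, and if all masses share a common sign their sum is of that sign), so the $M=0$ case is consistent with both sides being false.

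Next, for $M>0$, Proposition \ref{propSignPosta} gives signature $(p-1,q)$ on the $(N{-}1)$-dimensional subspace $\Spo$ (so $p-1+q=N-1$). Definiteness is equivalent to $q=0$ or $p-1=0$. The first alternative, $q=0$, means all masses are positive, i.e.\ all of them share the sign of $M$. The second, $p=1$, means there is exactly one positive mass, which is exactly ``exactly one mass has the sign of $M$'' in this case. The case $M<0$ is entirely symmetric: the signature $(p,q-1)$ is definite iff $p=0$ (all masses negative, same sign as $M$) or $q=1$ (exactly one negative mass, i.e.\ exactly one mass with the sign of $M$).

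Combining the three cases, $B|_\Spo$ is definite if and only if $M\neq 0$ and either all $m_i$ have the same sign or exactly one $m_i$ has the sign of $M$. There is no real obstacle here; the corollary is essentially a restatement of Proposition \ref{propSignPosta}, with the only bookkeeping step being the observation that in the $M>0$ (resp.\ $M<0$) case the minority sign ``$p=1$'' (resp.\ ``$q=1$'') is precisely the sign of $M$, rather than the opposite sign, because otherwise $M$ would not come out positive (resp.\ negative).
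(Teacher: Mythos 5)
Your proof is correct and follows exactly the route the paper intends: the paper simply states that Corollary \ref{corSgBrS1} is ``direct from Proposition \ref{propSignPosta}'', and your case analysis on the sign of $M$, translating the signatures $(p-1,q)$ and $(p,q-1)$ into the definiteness condition, is precisely that direct argument spelled out. The bookkeeping points you flag (that $M=0$ makes both sides false, and that the minority count $p=1$ or $q=1$ corresponds to the sign of $M$) are the only things to check, and you handle them correctly.
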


We state the next result that will be applied in the proof of Theorem \ref{thmCritGralColapso}.

\begin{cor}
\label{corSgBrS2}
For a system of $N\geq 3$ particles on the line with $M\neq 0$, 
$B|_\Spo$ 
has signature $(1,N-2)$ 
if and only if one of the following conditions hold:
 \begin{itemize}
  \item $M>0$ and there exactly 2 positive masses,
  \item $M<0$ and there is just one positive mass.
 \end{itemize}
\end{cor}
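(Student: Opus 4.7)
The plan is to derive this directly from Proposition \ref{propSignPosta}, since it already computes the signature of $B|_\Spo$ as a function of the numbers $p,q$ of positive and negative masses (together with $p+q=N$), split by the sign of $M$. So the corollary reduces to solving, in each sign case, the equation that sets the given signature formula equal to $(1,N-2)$.

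Concretely, first I would assume $M \neq 0$ throughout and write $p + q = N$. Then I would split into two cases.

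In the case $M>0$, Proposition \ref{propSignPosta} gives signature $(p-1,q)$, so $B|_\Spo$ has signature $(1,N-2)$ iff $p-1=1$ and $q=N-2$, that is, iff $p=2$. Since $p$ counts the positive masses, this is exactly the condition that there are exactly $2$ positive masses. In the case $M<0$, the proposition gives signature $(p,q-1)$, so $B|_\Spo$ has signature $(1,N-2)$ iff $p=1$ and $q-1=N-2$, i.e., iff there is exactly one positive mass. Both implications (the ``if'' and the ``only if'' directions) are immediate from these equivalences, so the corollary follows.

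There is really no serious obstacle here; the only thing worth double-checking is the bookkeeping $p+q=N$ so that the two coordinates of the signature are not independent, and consistency of the solved values with $N\geq 3$ (in the $M>0$ subcase we get $q=N-2\geq 1$, and in the $M<0$ subcase we get $q=N-1\geq 2$, both admissible). Since both equivalences match the two bullet points in the statement exactly, no further argument is needed.
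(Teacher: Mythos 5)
Your proof is correct and is exactly what the paper intends: it states that Corollary \ref{corSgBrS2} is ``direct from Proposition \ref{propSignPosta}'', and your case split on the sign of $M$ with the bookkeeping $p+q=N$ is precisely that direct derivation. Nothing further is needed.
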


The proofs of Corollaries \ref{corSgBrS1} and \ref{corSgBrS2} are direct from 
Proposition \ref{propSignPosta}.
The next result explains the Glashow-Mittag condition \eqref{ecuGM} mentioned above.

\begin{cor}
 Three particles satisfy $m_1m_2m_3(m_1+m_2+m_3)>0$ if and only if $B|_\Spo$ is definite. 
\end{cor}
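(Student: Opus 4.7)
The plan is to reduce to Corollary \ref{corSgBrS1} and then carry out elementary sign bookkeeping. That corollary already characterises definiteness of $B|_\Spo$ for three particles: it holds iff either (a) all three masses share a sign, or (b) exactly one mass has the sign of $M$. So the remaining content is purely arithmetic: verify that the disjunction ``(a) or (b)'' is equivalent to $m_1m_2m_3M>0$.

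First I would dispose of the case $M=0$. By Proposition \ref{propSignPosta}, the form $B|_\Spo$ is then degenerate and hence not definite, while the product $m_1m_2m_3M$ vanishes, so both sides of the equivalence fail. Therefore we may assume $M\neq 0$, and by replacing every $m_i$ with $-m_i$ if necessary (a transformation that preserves both sides) we may further assume $M>0$.

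Let $p\in\{0,1,2,3\}$ denote the number of positive masses among $m_1,m_2,m_3$. Then $\sg(m_1m_2m_3)=(-1)^{3-p}$, so $m_1m_2m_3M>0$ is equivalent to $p$ being odd. On the geometric side, $M>0$ rules out $p=0$; condition (a) then corresponds exactly to $p=3$, and condition (b) to $p=1$. Hence the disjunction holds iff $p\in\{1,3\}$, matching the Glashow--Mittag inequality. The case $M<0$ is symmetric.

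There is no genuine obstacle here: Corollary \ref{corSgBrS1} already supplies the structural statement, and what remains is a finite sign check. The only mild subtlety is remembering to treat $M=0$ separately so that the degenerate behaviour of $B|_\Spo$ aligns with the vanishing of the cubic $m_1m_2m_3M$.
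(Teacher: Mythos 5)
Your proof is correct and follows essentially the same route as the paper: both reduce the statement to Corollary \ref{corSgBrS1} and finish with an elementary sign/case analysis. Your version is slightly more systematic (normalizing to $M>0$ and counting the parity of the number of positive masses, plus explicitly disposing of $M=0$, which the paper leaves implicit), but there is no substantive difference in the argument.
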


\begin{proof}
The result is trivial for three masses of the same sign. 
If only one mass is negative, then the sum of the three has to be negative, 
and the result follows from Corollary \ref{corSgBrS1}. 
The case of two negative masses is analogous. 
\end{proof}

\subsection{Elastic collisions}
\label{secCollisions}

Between collision times the particles move with constant velocity. 
To consider the collision of the particle $i$ with $j$ 
define 
$$\mu_{ij}=\frac{2m_j}{m_i+m_j}.$$
If their velocities are $v_i$ and $v_j$ before the collision 
then the velocities after the collision are $w_i,w_j$ given by
\begin{equation}
\label{leychoque}
\left\{
\begin{array}{l}
w_i=(1-\mu_{ij})v_i +\mu_{ij}v_j\\
w_j=\mu_{ji}v_i + (1-\mu_{ji})v_j
\end{array}
\right.
\end{equation}
or equivalently 
\[
\left\{
\begin{array}{l}
w_i-v_i=\mu_{ij}(v_j-v_i)\\ 
w_j-v_j=\mu_{ji}(v_i-v_j).\\
\end{array}
\right.
\]
These equations are derived from the conservation laws of momentum and kinetic energy.
\begin{rmk}
If $m_i+m_j=0$ then $\mu_{ij}=\infty$ and the velocities after a collision diverge. 
Thus, we will assume that $m_i+m_j\neq 0$ for every pair of particles.  
\end{rmk}

Let us illustrate a collision between particles of masses of different sign. 
If $x_i\leq x_j$ then $v_i>v_j$, if we assume that there is a collision. 
Thus, $v_j-v_i<0$ and if $\mu_{ij}<0$ then $w_i>v_i$, i.e., the $i$-particle is accelerated. 
An illustration is given in Figure \ref{figReboteLoco}. 
\begin{figure}[ht]
\includegraphics{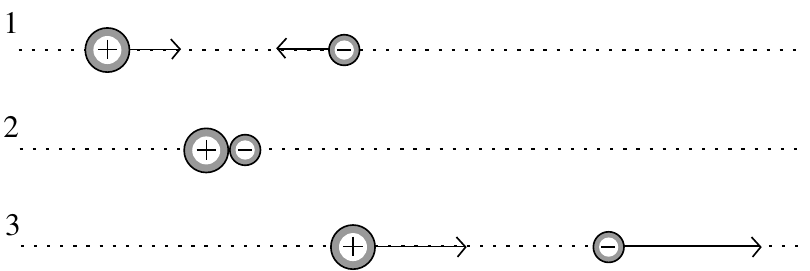}
\caption{
The circles represent the particles and the sign of the mass is indicated inside the circle.
The illustration shows a collision with $\mu_{12}<0$. 
For example, $m_1>0$ (left particle) and $m_2<0$ (right particle) with $|m_2|<m_1$. 
The numbers on the left indicate the evolution of time.}
\label{figReboteLoco}
\end{figure}
\begin{rmk}
In author's imagination the negative masses are small (in absolute value). 
This is because negative mass has not been observed in reality, thus, if it exists, it should be small. 
In \S\ref{secGravity} this idea is explored.
However, the collision illustrated in Figure \ref{figReboteLoco} also occurs if $m_1,m_2$ are replaced by $-m_1,-m_2$. 
In fact, from the kinematics we cannot distinguish a system $m_1,\dots,m_n$ from $am_1,\dots,am_N$ for $a\in\R$, $a\neq 0$.
What we observe in a collision is the coefficient $\mu_{ij}$.
\end{rmk}

\subsection{Maximal solutions}
\label{secMaxSol} 
Suppose $N$ particles in $\R$ with masses $m_1,\dots,m_N$, initial position $x(0)$ and initial velocity $v(0)$. 
The evolution of the system is described by the positions $x(t)=(x_1(t),\dots,x_N(t))$. 
The function $x(t)$ will be called as a \emph{solution}. 
Given the initial conditions, the system evolves respecting the collision law (\S \ref{secCollisions}) 
and having constant velocity between collisions. 
Solutions can be defined for all $t\in\R$ or not. 
For example, in a triple collision there may not be a satisfactory continuation of the solution. 
If at a certain time there two or more independent collisions, each collision involving only two particles, 
then the solution resolves each collision and continues. 

As we will see in \S\ref{secColSys}, 
in certain configurations of masses of different sign, there can be a convergent sequence of times $t_n$ where 
infinitely many collisions occur. 
If $t_n\to t_*$ with $t_n$ increasing or decreasing and $t_*$ finite, then we say that 
the solution has a \emph{collapse} at time $t_*$.
We say that three or more particles $i_1,\dots,i_n$ have a \emph{multiple collision} at time $t_*$ if 
$\diam\{x_{i_1}(t),\dots,x_{i_n}(t)\}\to 0$ as $t\to t_*$.
A multiple collision is \emph{direct} if there are no collisions between these particles 
in a time interval $(t_*-\epsilon,t_*)$, for some $\epsilon>0$. 
Otherwise we say that the multiple collision is \emph{indirect}. 
In general we assume that solutions have not direct multiple collisions.

\begin{prop}
 At time $t_*$ there is an indirect multiple collision or the velocity of a particle is not bounded.
\end{prop}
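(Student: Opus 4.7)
The plan is to argue by contraposition: assume the velocities of all particles stay bounded on $[t_0,t_*)$ by some constant $V$, and show that an indirect multiple collision must occur at $t_*$. Since each $x_i(t)$ is then $V$-Lipschitz, the limits $x_i^*:=\lim_{t\to t_*^-}x_i(t)$ exist. A direct computation from \eqref{leychoque} gives $w_j-w_i=v_i-v_j$ for any binary collision between the left particle $i$ and the right particle $j$; since $v_i>v_j$ at the moment of collision, the post-collision relative velocity is strictly positive, so the left particle stays on the left. Consequently the order of the particles on the line is preserved by the dynamics and only adjacent pairs can ever collide.

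Next I would argue that infinitely many collisions must accumulate at $t_*$. If only finitely many collisions occurred on some final interval $(s,t_*)$, then the particles move with constant velocities on $(s,t_*)$ and the state converges to a well-defined limiting configuration. By the standing hypothesis no direct multiple collision occurs at $t_*$, so the collisions taking place exactly at $t_*$ are independent binary collisions; each can be resolved by applying \eqref{leychoque}, and the solution extends past $t_*$, contradicting the maximality of $[t_0,t_*)$.

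With infinitely many collisions accumulating, pigeonhole on the finite set of adjacent pairs yields an adjacent pair $(i,i+1)$ colliding at a sequence $t_n\to t_*$. Continuity of positions forces $x_i^*=x_{i+1}^*$. I would then cluster the particles according to their limit positions; by order preservation each cluster is a set of consecutive indices. Choose $\delta>0$ strictly smaller than the minimum gap between distinct cluster limits; then for $t_*-t<\delta/(4V)$ any two particles in different clusters are at distance more than $\delta/2$ and therefore cannot collide on that window. Hence all collisions near $t_*$ are internal to some single cluster $C$, which by pigeonhole receives infinitely many.

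Finally I would rule out $|C|=2$. If $C=\{i,i+1\}$, then for $t$ close enough to $t_*$ the pair is isolated from everything else, so each internal collision is a standard two-body event; the identity $w_{i+1}-w_i=v_i-v_{i+1}>0$ then shows that after any such collision the pair separates with constant positive relative velocity and cannot collide again, a contradiction. Therefore $|C|\ge 3$; since $\diam\{x_j(t):j\in C\}\to 0$ while collisions between members of $C$ persist in every $(t_*-\epsilon,t_*)$, the set $C$ undergoes an indirect multiple collision at $t_*$. The step that I expect to require the most care is the extension argument of the second paragraph: one must verify that whenever only finitely many collisions accumulate at $t_*$, the independent binary collisions that can occur simultaneously at $t_*$ can be resolved one by one, a point that relies crucially on the exclusion of direct multiple collisions.
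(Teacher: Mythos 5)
Your proof is correct and follows essentially the same route as the paper's: bounded velocities make the positions Lipschitz, a pigeonhole argument locates an adjacent pair colliding infinitely often, and the sign reversal of the relative velocity at each binary collision ($w_j-w_i=v_i-v_j$) forces a third particle into the accumulation, whence the diameter of the group tends to zero and the multiple collision is indirect. Your cluster decomposition and the explicit exclusion of a two-particle cluster merely supply the justification for the step the paper asserts without detail (``at least one of these particles has infinitely many collisions with the other neighbor''), so the two arguments coincide in substance.
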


\begin{proof}
Suppose that the velocity of all the particles is bounded as $t$ approaches $t_*$. 
We will show that there is a triple collision. 
Let $0<t_1<t_2<\dots<t_*$ be as before.
Since the number of particles is finite, there is a pair of adjacent particles $i,j$ with infinitely many collisions. 
This implies that at least one of these particles has infinitely many collisions with the other neighbor. 
Thus, we can suppose that $i,j,k$ are three adjacent particles, $x_i\leq x_j\leq x_k$, with infinitely 
many collisions $i,j$ and $j,k$.

As we are assuming that the velocities are bounded, take $\nu>0$ such that $|v_l(t)|<\nu$ for $l=i,j,k$
and all $0\leq t<t_*$.
Suppose that the $i,j$ collisions occur at times $s_n\to t_*$.
If $s_n\leq t\leq s_{n+1}$ we have $|x_i(t)-x_j(t)|<\nu (s_{n+1}-s_n)$. 
Since $s_n\to t_*<\infty$, this implies that $|x_i(t)-x_j(t)|\to 0$ as $t\to t_*$. 
Analogously, $|x_j(t)-x_k(t)|\to 0$ as $t\to t_*$. This proves that there is an indirect multiple collision of the 
particles $i,j,k$ at $t_*$.
\end{proof}

In all what follows we will assume that the solutions are defined in a maximal interval of time.

\section{Geometric model}
\label{secGeomModel}

In this section we will show that the dynamics of $N$ particles on the line is a billiard. 
This result is well known for positive masses.
In the proofs known to the author, for instance \cite{Si73}*{Lecture 10}, 
it is shown a conjugacy, i.e. there is a map transforming the system of particles into a billiard. 
This map is usually defined by taking the square roots of the values of the masses. 
Our approach does not need these square roots, moreover, 
we do not need any transformation, instead we change the geometry of $\R^N$ by considering the form $B$ 
as an inner product.

In \S\ref{secMirror} we define the mirror law for inner products of arbitrary signature. 
In \S\ref{secEqBilPart} we show the billiard structure of systems of particles. 
In \S\ref{secMomIne} some of the geometric ideas are used to derive some properties of the moment of inertia. 

\subsection{Mirror law} 
\label{secMirror}
Billiard trajectories are a combination of linear (or geodesic) motions and reflections. 
In the classical theory of billiards, as for example in \cite{ChMa}, 
reflections are induced by a positive definite inner product. 
If in $\R^N$ we consider a non-degenerate simmetric bilinear form $B$ the mirror law can be generalized as follows \cites{Ta,KT}. 
Suppose that $S_i\subset \R^N$ is a codimension-one subspace (a wall) and a trajectory 
hits $S_i$ with velocity $v$. 
If there are vectors $l\in S_i$ and $n\in\R^N$ such that $B(l,n)=0$ and $v=l+n$, then the outgoing direction is $w=l-n$.
See Figure \ref{figRebProy}.

\begin{figure}[ht]
 \includegraphics{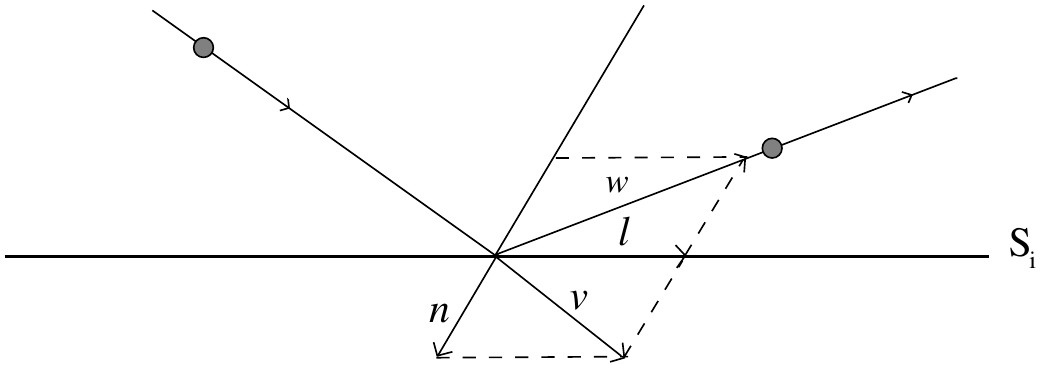}
 \caption{The mirror law associated to a form $B$.}
 \label{figRebProy}
\end{figure}

As an example, consider in $\R^2$ the inner product 
$B(a,b)=a_1a_2-b_1b_2$ where $a=(a_1,a_2)$ and $b=(b_1,b_2)$.
The associated quadratic form is $Q(a_1,a_2)=a_1^2-a_2^2$.
Suppose that $S_1$ and $S_2$ are contained in the positive cone. 
Note that $(a_2,a_1)$ is $B$-orthogonal to $(a_1,a_2)$. 
In Figure \ref{figGravitonGeom} we illustrate a trajectory of this billiard.

\begin{figure}[h]
 \includegraphics{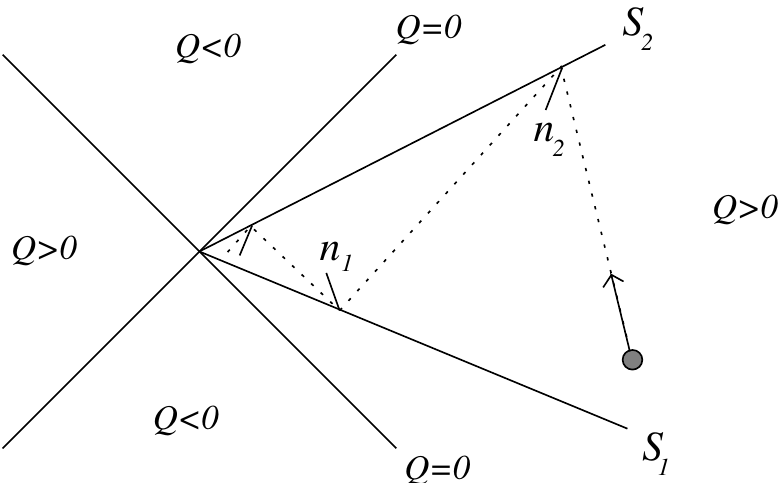}
 \caption{A trajectory in the plane with an indefinite inner product 
 that converges to the origin.}
 \label{figGravitonGeom}
\end{figure}

This kind of dynamics is related to the toy gravitons that we will see in \S\ref{secColSys}.

\subsection{Equivalence billiard-particles}
\label{secEqBilPart}
In this section we will show that the one-dimensional dynamics of a system of $N$ particles is equivalent to 
a billiard in a polyhedral angle. For this purpose it is convenient to think of $B$ (defined in \S\ref{secKinEn}) 
as an inner product. 
We start defining the geometrical notions that we will need. 
For $i,j=1,\dots,N$, $i\neq j$ 
define 
\begin{equation}
 \label{ecusGeom}
 \begin{array}{l}
 n_{ij}=\frac 1{m_j}e_j-\frac 1{m_i}e_i\\
S_{ij}=\{x\in \R^N:B(x,n_{ij})=0\}.
\end{array}
\end{equation}

Note that $B(x,n_{ij})=x_j-x_i$, thus $x\in S_{ij}$ if and only if $x_i=x_j$. 
Thus, the subspace $S_{ij}$ represents a collision between the particles $i$ and $j$.
In the next list we collect some direct formulas:
\begin{equation}
\label{ecusB}
 \begin{array}{l}
\displaystyle Q(n_{ij})=\frac{m_i+m_j}{m_im_j},\\
B(n_{ij},n_{jk})=-\frac 1{m_j},\\
B(n_{ij},n_{kl})=0,\\
\displaystyle\frac{[B(n_{ij},n_{jk})]^2}{B(n_{ij},n_{ij})B(n_{jk},n_{jk})}=\frac{m_i}{m_i+m_j}\frac{m_k}{m_j+m_k}=\mu_{ji}\mu_{jk}
\end{array}
\end{equation}
for different $i,j,k,l$.

From \eqref{leychoque} we have that the 
linear map
$T_{ij}\colon \R^N\to\R^N$ defined as 
\begin{equation}
\label{ecuTij}
 \left\{
 \begin{array}{l}
T_{ij}(e_i) = (1-\mu_{ij})e_i+\mu_{ji}e_j,\\
T_{ij}(e_j) = \mu_{ij}e_i+(1-\mu_{ji})e_j,\\
T_{ij}(e_k)=e_k, \text{ for }k\neq i, k\neq j
 \end{array}
 \right.
\end{equation}
represents a collision between the $i$ and $j$ particles. 

\begin{prop}
\label{propBilMass}
The form $B$ is preserved by $T_{ij}$ and it holds that $T_{ij}(n_{ij})=-n_{ij}$ and $T_{ij}(v)=v$ for all $v\in S_{ij}$.
Consequently, the dynamics of $N$ particles on the line 
is a billiard in $\R^N$
with walls $S_{ij}$ and normals $n_{ij}$ respectively.
\end{prop}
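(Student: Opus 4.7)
The plan is to verify the three algebraic assertions about $T_{ij}$ first (the only substantive computation), then read off the billiard interpretation from the fact that velocities are piecewise constant between collisions.

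\textbf{Step 1: Direct checks on $T_{ij}$.} I would compute $T_{ij}(n_{ij})$ by applying $T_{ij}$ to $n_{ij}=e_j/m_j - e_i/m_i$ using \eqref{ecuTij}. After collecting the coefficients of $e_i$ and $e_j$ and substituting $\mu_{ij}=2m_j/(m_i+m_j)$, $\mu_{ji}=2m_i/(m_i+m_j)$, both coefficients simplify (the denominators $m_i+m_j$ cancel), giving $T_{ij}(n_{ij})=e_i/m_i-e_j/m_j=-n_{ij}$. For the second claim, note that $S_{ij}=\{v\in\R^N:v_i=v_j\}$ since $B(v,n_{ij})=v_j-v_i$; for such $v$, the coefficient of $e_i$ in $T_{ij}(v)$ is $(1-\mu_{ij})v_i+\mu_{ij}v_j=v_i$ (using $v_i=v_j$), and similarly the coefficient of $e_j$ is $v_j$, so $T_{ij}(v)=v$.

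\textbf{Step 2: $B$-invariance.} From \eqref{ecusB} we have $Q(n_{ij})=(m_i+m_j)/(m_im_j)\neq 0$ (by the standing assumption $m_i+m_j\neq 0$), so $n_{ij}\notin S_{ij}$ and we obtain a $B$-orthogonal decomposition $\R^N=\R n_{ij}\oplus S_{ij}$. Any $v$ writes uniquely as $v=\alpha n_{ij}+l$ with $l\in S_{ij}$, and by Step 1 we get $T_{ij}(v)=-\alpha n_{ij}+l$. Then
\[
B(T_{ij}(v),T_{ij}(v))=\alpha^{2}Q(n_{ij})+B(l,l)=B(v,v),
\]
and polarizing (or repeating the argument for $T_{ij}(v\pm w)$) gives preservation of $B$.

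\textbf{Step 3: Billiard interpretation.} Between collisions all velocities $v_k$ are constant, so the configuration curve $x(t)\in\R^N$ moves on a straight line, i.e.\ along a $B$-geodesic. A collision between particles $i$ and $j$ occurs exactly when $x(t)$ hits the hyperplane $\{x:x_i=x_j\}=S_{ij}$, and by \eqref{leychoque} the new velocity is $T_{ij}(v)$. The decomposition $v=\alpha n_{ij}+l$ with $l\in S_{ij}$ and $B(l,n_{ij})=0$, together with Step 1, shows that the outgoing velocity $-\alpha n_{ij}+l$ is precisely what is prescribed by the mirror law of \S\ref{secMirror} applied to the wall $S_{ij}$ with $B$-normal $n_{ij}$.

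\textbf{Anticipated obstacle.} Nothing conceptual — the only place where one could slip is the algebraic simplification in Step 1, where one must be careful to remember that $\mu_{ij}$ and $\mu_{ji}$ are generally different (the mass factors $1/m_i$ and $1/m_j$ sitting in front of $e_i$ and $e_j$ in $n_{ij}$ are what make the sign flip come out right). Once that identity is in hand, the $B$-orthogonality of the decomposition and the billiard interpretation are immediate.
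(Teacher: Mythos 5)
Your proposal is correct and follows essentially the same route as the paper: the same direct computation showing $T_{ij}(n_{ij})=-n_{ij}$ and $T_{ij}(v)=v$ on $S_{ij}$ (the paper checks $T_{ij}(e_i+e_j)=e_i+e_j$ and uses linearity, you check the coefficients of a general $v$ with $v_i=v_j$ — equivalent). Your Step 2 is actually slightly more complete than the paper's proof, which asserts $B$-preservation without spelling out the $B$-orthogonal decomposition $\R^N=\R n_{ij}\oplus S_{ij}$ and the need for $Q(n_{ij})\neq 0$; that is a worthwhile detail to make explicit.
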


\begin{proof}
From the definitions and the linearity of $T_{ij}$ we have
\[
\begin{array}{ll}
T_{ij}(n_{ij})& =T(\frac 1{m_j}e_j-\frac 1{m_i}e_i)=\frac 1{m_j}T(e_j)-\frac 1{m_i}T(e_i)\\
& =\frac 1{m_j}(\mu_{ij}e_i+(1-\mu_{ji})e_j)-\frac 1{m_i}((1-\mu_{ij})e_i+\mu_{ji}e_j)\\
& =\frac 1{m_j}\mu_{ij}e_i+\frac 1{m_j}(1-\mu_{ji})e_j-\frac 1{m_i}(1-\mu_{ij})e_i-\frac 1{m_i}\mu_{ji}e_j\\
& =[\frac 1{m_j}\mu_{ij}-\frac 1{m_i}(1-\mu_{ij})]e_i+[\frac 1{m_j}(1-\mu_{ji}) -\frac 1{m_i}\mu_{ji}]e_j\\
& =[\frac 2{m_i+m_j}-\frac {m_i-m_j}{m_i(m_i+m_j)}]e_i+[\frac {m_j-m_i}{m_j(m_i+m_j)} -\frac 2{m_i+m_j}]e_j\\
& =\frac 1{m_i}e_i-\frac 1{m_j}e_j=-n_{ij}\\
\end{array}
\]
Also
\[
 \begin{array}{ll}
  T_{ij}(e_i+e_j)& =T_{ij}(e_i)+T_{ij}(e_j)= 
  (1-\mu_{ij})e_i+\mu_{ji}e_j + \mu_{ij}e_i+(1-\mu_{ji})e_j\\
  &=e_i+e_j
 \end{array}
\]
Since $v=\sum_{l=1}^Nv_le_l$ is in $S_{ij}$ if and only if $v_i=v_j$, we conclude that 
$T_{ij}(v)=v$ for all $v\in S_{ij}$.
This proves that the dynamics is a billiard.
\end{proof}

The next result remarks that certain configurations of masses of different sign have dynamics with finitely many collisions.
It extends \cite{Ga78} where it is proved that a system of positive masses has finitely many collisions.

\begin{cor}
\label{corFinColUno}
If a system of $N$ particles on the line 
has all the masses with the same sign or
there is just one mass with the sign of $M$
then every solution has finitely many collisions. 
\end{cor}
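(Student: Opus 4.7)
The plan is to reduce the dynamics to a Euclidean billiard in a convex polyhedral cone and then invoke Galperin's finiteness argument.

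First, I would use Corollary \ref{corSgBrS1}: under either hypothesis $B|_\Spo$ is definite, and in particular $M\neq 0$. Passing to the inertial frame with $P=0$ confines the trajectory $x(t)$ to $\Spo$. The collision maps $T_{ij}$ of Proposition \ref{propBilMass} fix $e_{1N}$ (immediate from $T_{ij}(e_i+e_j)=e_i+e_j$ together with $T_{ij}(e_k)=e_k$ for $k\neq i,j$), hence preserve the $B$-orthogonal complement $\Spo$. The restricted system is therefore a billiard on $\Spo$ in the sense of \S\ref{secMirror}, with inner product $B|_\Spo$.

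Next, by the remark at the end of \S\ref{secCollisions} one can replace the masses by their negatives without changing the kinematics, so I may assume that $B|_\Spo$ is positive definite. A linear isometry $\Phi\colon(\Spo,B|_\Spo)\to(\R^{N-1},\langle\cdot,\cdot\rangle)$ then transports the billiard to a standard Euclidean billiard in $\R^{N-1}$. Each particle ordering $x_{\sigma(1)}\le\cdots\le x_{\sigma(N)}$ singles out a convex simplicial cone in $\R^N$, namely the intersection of the closed half-spaces $x_{\sigma(k)}\le x_{\sigma(k+1)}$; its intersection with $\Spo$ and image under $\Phi$ is a convex polyhedral cone in $\R^{N-1}$. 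Inside this cone the trajectory moves in straight lines between collisions and reflects Euclideanly off the bounding hyperplanes.

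Finally, I would invoke the finiteness theorem for Euclidean billiards in convex polyhedral cones: this is Galperin's theorem \cite{Ga78} in the case the chamber arises from a positive-mass particle system, and its proof applies more generally to any Euclidean polyhedral billiard whose chamber is a convex cone. Transporting back then gives finitely many collisions of the original system.

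The main obstacle is the justification of the second step: that after restricting from $\R^N$ to $\Spo$ and linearly identifying with Euclidean space one still has a semi-dispersing polyhedral billiard to which Galperin's argument applies. Convexity of each ordering chamber is automatic from its definition and survives both intersection with a subspace and linear isomorphism, and nondegeneracy of the walls (so that distinct $S_{ij}\cap\Spo$ remain distinct hyperplanes under $\Phi$) follows from the definiteness of $B|_\Spo$ secured by Corollary \ref{corSgBrS1}. Once these facts are in hand the finiteness argument proceeds as in \cite{Ga78}.
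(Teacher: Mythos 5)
Your reduction matches the paper's: both arguments pass to $\Spo$, use Corollary \ref{corSgBrS1} to get a definite form, flip signs if necessary to make $B|_\Spo$ positive definite, and view the dynamics as a Euclidean billiard in a polyhedral angle via Proposition \ref{propBilMass}. The gap is in your last step. You claim that Galperin's theorem \cite{Ga78} ``applies more generally to any Euclidean polyhedral billiard whose chamber is a convex cone,'' but \cite{Ga78} proves finiteness of collisions for a system of positive-mass particles on a line, i.e.\ for the very special polyhedral angles arising from such systems, and you give no argument that its proof extends to an arbitrary convex polyhedral cone. This is not a cosmetic citation issue: after restricting to $\Spo$ and identifying with Euclidean space, the chamber you obtain is in general \emph{not} the chamber of any positive-mass particle system, so you cannot ``transport back'' and apply \cite{Ga78} as stated. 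The paper makes exactly this point in the remark following the corollary: configurations with negative masses can produce a positive definite $B|_\Spo$ whose polyhedral angle is genuinely new, and conversely the angles coming from particle systems satisfy special constraints (e.g.\ $B(n_{12},n_{34})=0$ for four particles), so the class of polyhedral angles needed here is strictly larger than the class covered by \cite{Ga78}.

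What closes the gap is Sinai's theorem \cite{Si78}, which asserts that every billiard trajectory in an arbitrary polyhedral angle (with respect to a positive definite inner product) has finitely many collisions; this is the result the paper invokes, and it is strictly stronger than \cite{Ga78}. If you replace your final appeal to Galperin by an appeal to \cite{Si78}, your proof becomes essentially the paper's proof; as written, the finiteness assertion for the general convex polyhedral cone is exactly the unproved step. (The rest of your argument --- that the $T_{ij}$ fix $e_{1N}$ and hence preserve $\Spo$, that the chamber is a convex cone, and that a change of inertial frame does not alter the number of collisions --- is correct and consistent with the paper.)
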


\begin{proof}
The result for positive masses was proved in \cite{Ga78}. 
If all the masses are negative then a change of sign reduces the situation to the positive case. 
If there is just one mass with the sign of $M$, then by Corollary \ref{corSgBrS1} we know that $B|_\Spo$ is definite. 
We can assume that it is positive definite.
By Proposition \ref{propBilMass} we have that the system is equivalent 
to a billiard in a polyhedral angle. 
Therefore, by \cite{Si78} we know that the trajectories in $\Spo$ have finitely many collisions. 
Recall that the solutions contained in $\Spo$ are those with the center of mass fixed at the origin. 
Finally, notice that a change of inertial frame does not change the number of collisions of a solution.
\end{proof}

%

\begin{rmk}
Let us explain why the result of \cite{Si78} is strictly stronger than \cite{Ga78}, 
where it is shown that every trajectory has finitely many collisions on a polyhedral angle with respect to a positive inner product.
On one hand, there are configurations of particles with negative mass 
giving rise to a positive $B|_\Spo$ (i.e. a standar polyhedral angle).
On the other hand, for a system of 4 particles, we see from
\eqref{ecusB} that $B(n_{12},n_{34})=0$. This is a restriction to the possible polyhedral angles 
that can be obtained from systems of particles on a line. 
\end{rmk}

\subsection{Moment of inertia} 
\label{secMomIne}
The moment of inertia of a configuration $x\in\R^N$ is 
$I(x)=B(x,x)=\sum_{i=1}^N m_i x_i^ 2$. 
We will describe the evolution of this number along a trajectory. 
Suppose a solution $x(t)$ with 
 initial conditions $x(0),v(0)$ and collisions at times $t_0=0<t_1<t_2<\dots$, where at $t_n$ there is a collision between 
 the particles $i_n$ and $j_n$. 
 Let $T_n=T_{i_n,j_n}$, where $T_{i,j}$ is given by \eqref{ecuTij}.
 Let $\tilde v_n\in\R^N$, for $n\geq 0$, be defined by $\tilde v_0=v(0)$ and $\tilde v_n=T_n(\tilde v_{n-1})$.
For $t_n\leq t\leq t_{n+1}$ we have that 
$x(t)= x(t_n)+(t-t_n)\tilde v_n$.
\begin{lem}
\label{lemaDesarrollo}
 For $t_n\leq t\leq t_{n+1}$ it holds that 
 \[
  T_1\circ\dots\circ T_n(x(t))=x(0)+tv(0).
 \]
\end{lem}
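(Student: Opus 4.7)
The plan is to proceed by induction on $n$, exploiting two facts about the reflection maps $T_k$: each $T_k$ is an involution, and $T_n$ fixes the collision subspace $S_{i_n,j_n}$ pointwise, which is exactly where the trajectory sits at the collision time $t_n$. Both are consequences of Proposition \ref{propBilMass}, together with the fact that $\R^N = S_{ij}\oplus\R\, n_{ij}$ (since $Q(n_{ij}) = (m_i+m_j)/(m_im_j)\neq 0$ by our standing assumption $m_i+m_j\neq 0$), which shows that the reflection $T_{ij}$ squares to the identity.

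The base case $n=0$ is immediate: the composition is empty and $x(t)=x(0)+t\tilde v_0=x(0)+tv(0)$ for $0\le t\le t_1$. For the inductive step, assume the identity at level $n-1$. Evaluating at the collision instant $t_n$ and using continuity of position, the inductive hypothesis gives
\[
T_1\circ\cdots\circ T_{n-1}(x(t_n))=x(0)+t_n v(0).
\]
Because $x(t_n)\in S_{i_n,j_n}$ and $T_n$ fixes $S_{i_n,j_n}$ pointwise, we have $T_n(x(t_n))=x(t_n)$, so $T_n$ may be inserted freely to obtain
\[
T_1\circ\cdots\circ T_n(x(t_n))=x(0)+t_n v(0).
\]

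For $t_n\le t\le t_{n+1}$ the trajectory is affine, $x(t)=x(t_n)+(t-t_n)\tilde v_n$. Applying the linear map $T_1\circ\cdots\circ T_n$ splits into a position term (already computed) and a velocity term
\[
(t-t_n)\,T_1\circ\cdots\circ T_n(\tilde v_n)=(t-t_n)\,T_1\circ\cdots\circ T_n\circ T_n\circ\cdots\circ T_1(v(0)),
\]
since by definition $\tilde v_n=T_n\circ\cdots\circ T_1(v(0))$. Using $T_k\circ T_k=\mathrm{id}$ repeatedly, the composition telescopes to the identity and the velocity term becomes $(t-t_n)v(0)$. Adding the two contributions yields $x(0)+tv(0)$, completing the induction.

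The only subtle point is the telescoping of the reflections, which depends on $T_k$ being a genuine involution for the possibly indefinite form $B$; once this is noted, the rest is just bookkeeping with the affine motion between collisions and the pointwise fixing of $S_{i_n,j_n}$ by $T_n$.
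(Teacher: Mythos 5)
Your proof is correct and follows essentially the same inductive argument as the paper: base case on $[0,t_1]$, insertion of $T_n$ at the collision time using $x(t_n)\in S_{i_n,j_n}$, and linearity on the affine piece $x(t)=x(t_n)+(t-t_n)\tilde v_n$. The only difference is that you explicitly justify the identity $T_1\circ\cdots\circ T_n(\tilde v_n)=v(0)$ by the involution property of each reflection (which holds since $Q(n_{ij})\neq 0$ gives $\R^N=S_{ij}\oplus\R n_{ij}$), a step the paper uses but leaves implicit.
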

\begin{proof}
To prove it by induction, notice that for $n=0$ we have 
$x(t)=x(0)+tv(0)$ for all $t\in[0,t_1]$. 
Now assume that the statement holds for $t_{n-1}\leq t\leq t_n$. 
This implies that 
$T_1\circ\dots\circ T_{n-1}(x(t_n))=x(0)+t_nv(0)$. 
Since at time $t_n$ there is a collision between the particles $i_n$ and $j_n$ we have that 
$x(t_n)\in S_{i_n,j_n}$ (the subspaces defined in \eqref{ecusGeom}).
Therefore, Proposition \ref{propBilMass} implies that $T_n(x(t_n))=x(t_n)$ and 
$$T_1\circ\dots\circ T_{n-1}\circ T_n(x(t_n))=x(0)+t_nv(0).$$
For $t\in[t_n,t_{n+1}]$ we know that $x(t)=x(t_n)+(t-t_n)\hat v_n$ and 
\[
 \begin{array}{ll}
  T_1\circ\dots\circ T_n(x(t))
  & = T_1\circ\dots\circ T_n(x(t_n)+(t-t_n)\hat v_n)\\
  & = T_1\circ\dots\circ T_n(x(t_n))+(t-t_n)T_1\circ\dots\circ T_n(\hat v_n)\\
  & = x(0)+t_nv(0)+(t-t_n)\hat v_0\\
  &=x(0)+tv(0).
 \end{array}
\]
This finishes the proof.
\end{proof}

\begin{prop}
For every solution $x(t)$ it holds that
\begin{equation}
 \label{ecuEvoB}
 I(x(t))=I(x(0))+2tB(x(0),v(0))+2t^2E=Q(x(0)+tv(0))
\end{equation}
\end{prop}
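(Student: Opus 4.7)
The plan is to read off the identity from Lemma \ref{lemaDesarrollo} together with the fact, established in Proposition \ref{propBilMass}, that each collision map $T_{ij}$ preserves the form $B$ (hence preserves $Q$ and $I$). Since $I(x) = Q(x)$ by definition, the second equality in \eqref{ecuEvoB} is essentially a one-line consequence, and the first equality is then just the expansion of a quadratic form.

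More precisely, I would proceed as follows. Fix $t$ and let $n$ be such that $t_n \leq t \leq t_{n+1}$. By Lemma \ref{lemaDesarrollo},
\[
  T_1\circ\cdots\circ T_n(x(t)) = x(0) + tv(0).
\]
Applying $Q$ to both sides and using that each $T_k$ preserves $B$ (Proposition \ref{propBilMass}), hence preserves $Q$, we obtain
\[
  I(x(t)) = Q(x(t)) = Q\bigl(T_1\circ\cdots\circ T_n(x(t))\bigr) = Q(x(0)+tv(0)),
\]
which is the second equality in \eqref{ecuEvoB}.

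For the first equality I would simply expand by bilinearity:
\[
  Q(x(0)+tv(0)) = B(x(0),x(0)) + 2t\,B(x(0),v(0)) + t^2 B(v(0),v(0)),
\]
and then identify $B(x(0),x(0)) = I(x(0))$ and $B(v(0),v(0)) = 2E$, where $E$ is the (conserved) kinetic energy. This yields $I(x(0)) + 2tB(x(0),v(0)) + 2t^2 E$, as desired.

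There is no real obstacle here: the whole point is that Lemma \ref{lemaDesarrollo} has already absorbed the work of tracking what the collisions do. The only thing to be slightly careful about is that the formula is asserted for \emph{every} $t$ in the domain of the maximal solution, so one must note that the choice of $n$ depends on $t$ but the resulting expression $Q(x(0)+tv(0))$ does not. The rest is linear algebra.
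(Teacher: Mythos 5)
Your proof is correct and follows exactly the paper's own route: Lemma \ref{lemaDesarrollo} plus the fact that each $T_n$ is a $B$-isometry (Proposition \ref{propBilMass}) gives $I(x(t))=Q(x(0)+tv(0))$, and the first equality is bilinear expansion with $Q(v(0))=2E$. Nothing to add.
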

\begin{proof}
The second equality follows from the bilinearity of $B$.
By Proposition \ref{propBilMass} we know that each $T_n$ is a $B$-isometry. 
Thus, $Q(T_1\circ\dots\circ T_n(x(t)))=Q(x(t))$. 
Therefore, the result follows by Lemma \ref{lemaDesarrollo}.
\end{proof}

%

\section{Collapsing systems}
\label{secColSys}

In this section we give a characterization of systems with collapsing solutions.

\subsection{Global collapse}

%
%
%

We say that a solution $x(t)$ has a \emph{global collapse} at time $s\in\R$ if 
the system collapses at time $s$ (according to \S\ref{secMaxSol}) and each pair of adjacent particles has infinitely many collision as $t\to s$.
We say that it has a \emph{double global collapse} if it has a global collapse at $s_1>0$ and at $s_2<0$.

Given $1\leq j\leq k\leq N-1$ define 
$$m_{j,k}=m_j+\dots+m_k.$$

\begin{lem}
\label{lemBPosEnCono}
If a system satisfies
\begin{equation}
\label{ecuBPosEnCono}
m_{1,j}m_{k+1,N}M>0\text{ for all }1\leq j \leq k\leq N-1,
\end{equation}
then for all $x\in \Spo$ such that $x_1\leq\dots\leq x_N$ it holds that
$I(x)\geq 0$ with equality only if $x=0$.
\end{lem}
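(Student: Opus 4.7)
The plan is to reduce $I(x)$, on the hyperplane $\Spo$ and under the ordering $x_1\le\dots\le x_N$, to an explicit quadratic form in the non-negative consecutive gaps $d_k:=x_{k+1}-x_k\ge 0$ whose every coefficient is forced to be positive by the hypothesis \eqref{ecuBPosEnCono}.

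First I would use the identity
\[
 2M\cdot I(x)\;=\;\sum_{i,j=1}^N m_im_j(x_i-x_j)^2,
\]
which on $\Spo$ is immediate from expanding the square and using $\sum_i m_ix_i=0$: the expansion gives $2M\sum_i m_ix_i^2-2(\sum_im_ix_i)^2$. Note that \eqref{ecuBPosEnCono}, being a strict inequality, forces $M\neq 0$ (and also forces every partial sum $m_{1,j}$ and $m_{k+1,N}$ to be nonzero), so we may freely divide by $M$.

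Next I would expand $x_j-x_i=d_i+\dots+d_{j-1}$ for $i<j$, giving $(x_i-x_j)^2=\sum_{i\le k,l\le j-1}d_kd_l$, and swap the order of summation. A short bookkeeping shows that the coefficient of $d_kd_l$ in $M\cdot I(x)$ becomes
\[
 \Bigl(\sum_{i\le\min(k,l)} m_i\Bigr)\Bigl(\sum_{j\ge\max(k,l)+1} m_j\Bigr)=m_{1,\min(k,l)}\,m_{\max(k,l)+1,N}.
\]
By \eqref{ecuBPosEnCono}, each such product times $M$ is strictly positive, so dividing by $M$ yields
\[
 I(x)\;=\;\sum_{k,l=1}^{N-1} c_{kl}\,d_kd_l,\qquad c_{kl}:=\frac{m_{1,\min(k,l)}\,m_{\max(k,l)+1,N}}{M}>0.
\]
Since all $c_{kl}>0$ and all $d_k\ge 0$, we get $I(x)\ge 0$.

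For the equality case, the diagonal terms alone give $I(x)\ge\sum_k c_{kk}d_k^2$, so $I(x)=0$ forces every $d_k=0$. Hence $x_1=\dots=x_N=c$ for some constant $c$, and the relation $cM=\sum_i m_ix_i=0$ together with $M\neq 0$ yields $c=0$, i.e.\ $x=0$. The only mildly delicate step is the re-indexing of the double sum $\sum_{i<j}\sum_{k,l}$ into $\sum_{k,l}\sum_{i,j}$, which I expect to be routine bookkeeping rather than a genuine obstacle.
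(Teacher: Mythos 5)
Your proof is correct and arrives at exactly the same decomposition as the paper: writing $I(x)$ as a quadratic form in the gaps $d_k=x_{k+1}-x_k\ge 0$ with coefficients $m_{1,\min(k,l)}m_{\max(k,l)+1,N}/M>0$, which is precisely what the paper obtains by expanding $x=\sum_k d_k\xi_k$ with $\xi_k=\frac{m_{1,k}}{M}e_{1N}-e_{1k}$ and computing $B(\xi_j,\xi_k)=m_{1,j}m_{k+1,N}M^{-1}$. The only difference is cosmetic: you reach the coefficients via the identity $2M\,I(x)=\sum_{i,j}m_im_j(x_i-x_j)^2$ and a reindexing, rather than via the explicit $B$-products of the vectors $\xi_k$.
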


\begin{proof}
Given $1\leq k\leq N-1$ define 
$e_{1k}=e_1+\dots+e_k$
and 
%
\[
 \xi_k=\frac{m_{1,k}}{M}e_{1N}-e_{1k}.  
\]
If $j\leq k$ then $B(e_{1j},e_{1k})=m_{1,j}$ and
\[ 
\begin{array}{ll}
 B(\xi_j,\xi_k) & =B(\frac{m_{1,j}}{M}e_{1N}-e_{1j},\frac{m_{1,k}}{M}e_{1N}-e_{1k})\\
 & =M^{-2}B(m_{1,j}e_{1N}-Me_{1j},m_{1,k}e_{1N}-Me_{1k})\\
 & =M^{-2}(
 m_{1,j}m_{1k}M
 -m_{1,j}M m_{1,k}
 -Mm_{1,k} m_{1,j}+
 M^2 m_{1,j}
 )\\
 & =M^{-1}(
 -m_{1,k} m_{1,j}+
 M m_{1,j}
 ) =M^{-1}m_{1,j}(-m_{1,k} + M)\\
 &= m_{1,j}m_{k+1,N}M^{-1}
\end{array}
\]
for all $1\leq j\leq k\leq N-1$. 
Thus, \eqref{ecuBPosEnCono}
is equivalent to 
$B(\xi_j,\xi_k)>0$ for all $1\leq j\leq k\leq N-1$.

Given $x=\sum_{k=1}^N x_ke_k\in \Spo$ (i.e., $\sum_{k=1}^Nm_kx_k=0$) we have
\[
 \begin{array}{lll}
  \sum_{k=1}^{N-1}(x_{k+1}-x_k)\xi_k 
   & = & -x_1\xi_1+\sum_{k=2}^{N-1} x_k(\xi_{k-1}-\xi_k) + x_N\xi_{N-1}\\
   & = & x_1(e_1-\frac{m_1}{M}e_{1N})\\
   &&+\sum_{k=2}^{N-1} x_k(e_k-\frac{m_k}Me_{1N}) + x_N(e_N-\frac{m_N}{M}e_{1N})\\
   & = & \sum_{k=1}^N x_ke_k-\sum_{k=1}^N \frac{m_kx_k}{M}e_{1N}=x.
 \end{array}
\]
This implies that $I(x)>0$ for all $x\neq 0$ such that 
$x_1\leq\dots\leq x_N$. 
%
%
%
\end{proof}

\begin{rmk}
From a geometric viewpoint, the condition $I(x)\geq 0$ of Lemma \ref{lemBPosEnCono} means 
that the polyhedral angle given by $x_1\leq\dots\leq x_N$ is contained in a positive cone as in 
the example shown in Figure \ref{figGravitonGeom}.
\end{rmk}

In what follows we will consider systems satisfying the next conditions: 
\begin{equation}
 \label{ecuGasNeg}
 \left\{
 \begin{array}{l}
  m_2,\dots,m_{N-1}<0,\\
  m_1+m_2+\dots+m_{N-1}>0,\\
  (m_2+\dots+m_N)M>0,\\
  x_1\leq\dots\leq x_N.
 \end{array}
 \right.
\end{equation}

\begin{rmk}
 Condition \eqref{ecuGasNeg} implies \eqref{ecuBPosEnCono}. To prove it, first note that \eqref{ecuGasNeg} implies
 $m_1+\dots+m_j>0$ for all $1\leq j \leq N-1$.
Therefore, to prove \eqref{ecuBPosEnCono} we have to show that 
 $(m_{k+1}+\dots+m_N)M>0$ for all $1 \leq k\leq N-1$.
If $M>0$, then $m_2+\dots+m_N>0$, and since $m_{k+1}+\dots+m_N\geq m_2+\dots+m_N$ we conclude \eqref{ecuBPosEnCono}.
If $M<0$, since $m_1+m_2+\dots+m_{N-1}>0$ we have that $m_N$ is negative (and large in absolute value). 
Then, $m_{k+1}+\dots+m_N<0$ for all $1 \leq k\leq N-1$, which proves \eqref{ecuBPosEnCono}.
 
\end{rmk}

\begin{rmk}
A system of 3 particles satisfies \eqref{ecuGasNeg} if and only if: 
\begin{itemize}
 \item (toy graviton \S\ref{secToyG}) $m_1+m_2>0$, $m_2<0$ and $m_2+m_3>0$ or
 \item (compressor \S\ref{secCompr}) $m_1+m_2>0$, $m_2<0$ and $M<0$.
\end{itemize}
\end{rmk}

We say that a solution is \emph{trivial} if $v_1=\dots =v_N$. 
As we will remark below, the next result applies to toy gravitons and compressors.

\begin{thm}
\label{thmCritGralColapso}
If a system satisfies \eqref{ecuGasNeg} 
then 
every collapse is global, 
$B|_\Spo$ has signature $(1,N-2)$ 
and for every non-trivial solution $x(t)$ without direct multiple collisions and with energy $E$ it holds that:
\begin{itemize}
 \item If $E<0$ then the solution has a double global collapse.
 \item If $E\geq 0$ then $x(t)$ has a global collapse.
\end{itemize}
\end{thm}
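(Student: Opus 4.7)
The plan is to split the proof into three parts: the signature of $B|_{\Spo}$, the globality of any collapse, and the existence of a collapse via the quadratic $t\mapsto I(x(t))$. The signature claim is immediate from Corollary \ref{corSgBrS2}: \eqref{ecuGasNeg} forces $m_1>0$ (it must overcome the negative $m_2,\dots,m_{N-1}$), and then $(m_2+\cdots+m_N)M>0$ fixes the sign of $m_N$, yielding exactly two positive masses if $M>0$ and exactly one if $M<0$. Corollary \ref{corSgBrS2} then gives signature $(1,N-2)$ in either case.

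For globality, suppose a collapse occurs at time $s$ but some adjacent pair $(i,i+1)$ collides only finitely often as $t\to s$. Then on an interval $(s-\epsilon,s)$ the system decouples into non-interacting subsystems $A=\{1,\dots,i\}$ and $B=\{i+1,\dots,N\}$. Using the partial-sum estimates $m_{1,j}>0$ for $1\le j\le N-1$ (recorded in the remark preceding the theorem) together with \eqref{ecuGasNeg}, one verifies in both the $M>0$ and $M<0$ cases that $M_A$ and $M_B$ are nonzero and that each subsystem either has all masses of one sign or exactly one mass matching the sign of its own total. By Corollary \ref{corSgBrS1} the restrictions of $B$ to the corresponding $\Spo$-subspaces are definite, so by Corollary \ref{corFinColUno} each subsystem has only finitely many internal collisions on $(s-\epsilon,s)$; combined with the finitely many $(i,i+1)$-collisions this contradicts the existence of a collapse at $s$. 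Hence every adjacent pair collides infinitely often, i.e.\ the collapse is global.

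For existence, \eqref{ecuEvoB} gives $f(t):=I(x(t))=I(x_0)+2B(x_0,v_0)t+2Et^2$. The remark shows \eqref{ecuGasNeg} implies \eqref{ecuBPosEnCono}, so by Lemma \ref{lemBPosEnCono} we have $I>0$ on $C:=\{x\in\Spo:x_1\le\cdots\le x_N\}\setminus\{0\}$; homogeneity plus continuity then ensures that $I(x(t))\to 0$ in $C$ forces $x(t)\to 0$. Since the ordering is preserved by the dynamics, $f\ge 0$ on the maximal interval. For $E<0$ the concave parabola $f$ has $f(0)>0$ and two real roots $s_2<0<s_1$, producing collapses in both directions, hence a double global collapse by globality. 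For $E=0$ with non-trivial $v_0\ne 0$, $Q(v_0)=0$; since $\Spo$ has signature $(1,N-2)$ and $Q(x_0)>0$, the $B$-orthogonal complement of $x_0$ in $\Spo$ is negative definite, so $v_0\notin x_0^\perp$, giving $B(x_0,v_0)\ne 0$ and a unique root of $f$. For $E>0$, both $Q(x_0)$ and $Q(v_0)$ are positive; the positive index of $B|_{\Spo}$ being $1$ forbids any positive definite $2$-plane, so $Q(x_0)Q(v_0)-[B(x_0,v_0)]^2\le 0$, the discriminant of $f$ is non-negative, and $f$ has a real root. The no-direct-multiple-collision hypothesis excludes the degenerate possibility $v_0\in\R x_0$ (in which case $x(t)=(1+t\lambda)x_0$ would produce a direct collision of all particles at the origin).

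The main obstacle is the subsystem book-keeping for globality: verifying under both signs of $M$ and for every $1\le i\le N-1$ that both $A$ and $B$ satisfy the hypotheses of Corollary \ref{corSgBrS1}. Once this definiteness is in hand, Corollary \ref{corFinColUno} makes the contradiction automatic, and the rest of the theorem follows cleanly from the quadratic analysis of $f$ together with the Lorentzian structure of signature $(1,N-2)$.
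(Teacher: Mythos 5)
Your proposal is correct and follows essentially the same route as the paper's proof: the signature claim via Corollary \ref{corSgBrS2}, globality by decoupling into the subsystems $A$ and $B$ and applying Corollaries \ref{corSgBrS1} and \ref{corFinColUno}, and the existence of collapses from the quadratic $t\mapsto I(x(t))$ with the cases $E<0$, $E=0$, $E>0$ treated identically (your ``no positive definite $2$-plane'' argument for $E>0$ is just the reversed Cauchy--Schwarz inequality of Lemma \ref{lemCS} in disguise). The subsystem bookkeeping you flag as the main obstacle is carried out in the paper exactly as you sketch it, using \eqref{ecuBPosEnCono} with $j=k=i$ and $j=k=N-1$ to pin down the signs of $m_N$ and $m_{i+1}+\dots+m_N$.
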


\begin{proof}
Let $s$ be the time of a collapse.
Arguing by contradiction suppose that the particles $i,i+1$ have no collision in a neighborhood of $s$.
This implies that at least one of the subsystems $A=\{x_1,\dots,x_i\}$ or $B=\{x_{i+1},\dots,x_N\}$ has a collapse. 
In the subsystem $A$, by \eqref{ecuGasNeg} only $m_1$ (which is positive) has the sign of $m_1+\dots+m_i$. 
Thus, Corollary \ref{corFinColUno} implies that $A$ cannot collapse.
If $M<0$ then \eqref{ecuGasNeg} implies that $m_N<0$.
Thus, the masses of the subsystem $B$ are all negative, and by Corollary \ref{corFinColUno} $B$ cannot collapse.
Suppose that $M>0$. 
Applying \eqref{ecuBPosEnCono} with $j=k=N-1$ we obtain 
$(m_1+\dots+m_{N-1})m_NM>0$. Since $m_1+\dots+m_{N-1}>0$ we have that $m_N>0$. 
Again, \eqref{ecuBPosEnCono} applied for $j=k=i$ gives 
$(m_1+\dots+m_i)(m_{i+1}+\dots+m_N)M>0$. 
This implies that $m_{i+1}+\dots+m_N>0$. 
Considering the subsystem $B$, 
only $m_N$ has the sign of $m_{i+1}+\dots+m_N$ (positive), thus 
Corollary \ref{corFinColUno} implies that $B$ cannot collapse. 
This proves that every collapse is global.

The fact that $B|_\Spo$ has signature $(1,N-2)$ follows directly from Corollary \ref{corSgBrS2}, jointly with the previous considerations. 

Suppose that $x(t)$ is a solution without direct multiple collisions and with energy $E$.
By Lemma \ref{lemBPosEnCono} we know that $I(x)>0$
whenever $x_1\leq\dots\leq x_N$ and $x\neq 0$.
Consider 
\begin{equation}
 \label{ecuPoly}
 p(t)=I(x(0))+2tB(x(0),v(0))+2t^2E.
\end{equation}
If $E<0$ then $p(t)$ has two roots $s_1<0<s_2$.
By Equation \eqref{ecuEvoB} we have that the solution cannot be defined beyond the interval $(s_1,s_2)$. 
Thus, there is a double collapse (which as we proved, has to be global).
%

If $E=0$ then $p(t)=I(x(0))+2tB(x(0),v(0))$. 
We will show that $B(x(0),v(0))\neq 0$. 
Since the signature of $B|_\Spo$ is $(1,N-2)$ and $I(x(0))>0$ we 
have that $B|_{\Spo\cap x(0)^ \perp}$ is negative definite, where $x(0)^\perp$ denotes the $B$-orthogonal complement of $x(0)$. 
As $v(0)\neq 0$ and $E=Q(v(0))=0$, we have that $v(0)\notin x(0)^\perp$ 
and $B(x(0),v(0))\neq 0$.
Thus, $p(t)$ has a root at $s=-I(x(0))/2B(x(0),v(0))$ and the previous arguments implies that there is a global collapse.

Finally, assume that $E>0$. 
Since the solution has not direct multiple collisions we have that $x(0)$ and $v(0)$ are 
linearly independent. Let $V\subset \Spo$ be the plane generated by these vectors.
Since $B|_\Spo$ has signature $(1,N-2)$ we conclude that $B|_V$ is indefinite with signature $(1,1)$. 
By Lemma \ref{lemCS} we have that 
$[B(x(0),v(0))]^2\geq I(x(0))2E$. 
This implies that $p(t)$ always has at least one root 
(and if it has two roots then they have the same sign). 
Thus, the result is proved.
\end{proof}

Notice that for $N$ particles as in Theorem \ref{thmCritGralColapso} we have not proved that the solutions 
are defined for all $t\in (s_1,s_2)$. 
In the next result we prove it for $N=3$, that is the case we need for \S\ref{secGravity}.
It would be interesting to give a proof for general $N$. 

\begin{thm}
\label{thmCritGralColapsoDos}
Suppose that 
$x(t)$ is a solution with $E<0$ of a 
system of $3$ particles that satisfies \eqref{ecuGasNeg}.
Then, $x(t)$ is defined for all $t\in (s_1,s_2)$, where $s_1,s_2$ are roots of the polynomial 
\eqref{ecuPoly}. 
In particular, $x_3(t)-x_1(t)\to 0$ as $t\to s_i$, $i=1,2$.
\end{thm}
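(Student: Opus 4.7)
The plan is to combine the identity $I(x(t))=p(t)$ from \eqref{ecuEvoB} with Lemma \ref{lemBPosEnCono}, using the classification of how a maximal solution can terminate.

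First I would note that \eqref{ecuGasNeg} forces $M\neq 0$, so after passing to the center-of-mass frame I may assume $x(t)\in\Spo$ throughout. A short check with \eqref{leychoque}, using $\mu_{ij}+\mu_{ji}=2$, yields $w_j-w_i=v_i-v_j$, so every binary elastic collision preserves the order $x_1\le x_2\le x_3$. Hence the solution stays in the closed cone $\mathcal{C}=\{x_1\le x_2\le x_3\}\cap\Spo$, and by Lemma \ref{lemBPosEnCono} the moment of inertia satisfies $I(x)\ge 0$ on $\mathcal{C}$, with equality only at the apex $x=0$.

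Because $E<0$ and $p(0)=I(x(0))>0$, the polynomial $p$ of \eqref{ecuPoly} is concave with roots $s_1<0<s_2$ and is strictly positive on $(s_1,s_2)$. Combining $I(x(t))=p(t)$ with $I\ge 0$ on $\mathcal{C}$ forces the maximal interval to lie in $[s_1,s_2]$. Suppose now the maximal forward time is some $\beta<s_2$. The Proposition in \S\ref{secMaxSol} says the extension must fail either through a velocity blowing up or through an indirect multiple collision. Blowup is ruled out because \eqref{ecuGasNeg} yields $m_1+m_2>0$ and $(m_2+m_3)M>0$, so every adjacent mass-sum is nonzero and every $\mu_{ij}$ remains finite. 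An indirect multiple collision at $\beta$ must, by Theorem \ref{thmCritGralColapso}, be a global collapse, and the argument in \S\ref{secMaxSol} then gives $|x_i(t)-x_j(t)|\to 0$ for every pair, i.e.\ $x(t)\to 0$, so $p(\beta)=0$---contradicting $\beta<s_2$. The only remaining possibility, and the actual novelty over Theorem \ref{thmCritGralColapso}, is a direct triple collision at $\beta$; but for $N=3$ this also forces $x(\beta)=0$ and hence $p(\beta)=0$, again a contradiction. Therefore $\beta=s_2$, and the backward-time case is symmetric.

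For the \emph{in particular} clause, as $t\to s_i$ we have $I(x(t))=p(t)\to 0$. Since $B|_{\Spo}$ has signature $(1,1)$ on the plane $\Spo$ and $Q$ is strictly positive on $\mathcal{C}\setminus\{0\}$, the cone $\mathcal{C}$ sits strictly inside one component of $\{Q>0\}$, so $Q$ is coercive on $\mathcal{C}$; thus $x(t)\to 0$ and in particular $x_3(t)-x_1(t)\to 0$. The step I expect to be the main obstacle is excluding the direct triple collision in the open interior, which is the real content beyond Theorem \ref{thmCritGralColapso}; Lemma \ref{lemBPosEnCono} is the decisive tool here, since it converts the vanishing of the moment of inertia inside $\mathcal{C}$ into the collapse of the configuration onto the apex.
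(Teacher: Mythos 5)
Your reduction of the problem to ruling out a premature termination time $\beta<s_2$ is reasonable, and the coercivity of $Q$ on the cone $\{x_1\le x_2\le x_3\}\cap\Spo$ (via Lemma \ref{lemBPosEnCono}) correctly handles the \emph{in particular} clause. But there is a genuine gap at the decisive step: you dismiss velocity blow-up at $\beta$ on the grounds that each $\mu_{ij}$ is finite. Finiteness of the collision coefficients only controls a single collision; over infinitely many collisions the velocities can, and in this system do, diverge --- the energy form is indefinite, so $Q(v)=2E$ stays constant while the Euclidean norm of $v$ grows without bound (indeed, the limit computation in \S\ref{secGravity} shows the kinetic energy $T$ of the outer particles tends to $+\infty$ as $r\to 0$, so blow-up actually occurs at the true collapse time). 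Consequently the Proposition of \S\ref{secMaxSol} is a disjunction you cannot collapse: a hypothetical accumulation of collision times at $\beta<s_2$ with unbounded velocities need not produce an indirect multiple collision, so you cannot conclude $x(\beta)=0$ and $p(\beta)=0$. The case you fail to exclude is precisely the hard one, and it is the actual content of the theorem beyond Theorem \ref{thmCritGralColapso}.

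The paper closes this gap by a quantitative dynamical argument rather than a case analysis of failure modes. It sets $T=T_{12}\circ T_{23}|_\Spo$, shows $T$ is a $B$-isometry of the plane $\Spo$ of signature $(1,1)$ with $\det T=1$ and $T\ne\pm\mathrm{Id}$ (by an explicit matrix computation), hence hyperbolic with real eigenvalues $\lambda,\lambda^{-1}$, $|\lambda|\ne 1$. Lemma \ref{lemaDesarrollo} then places the $2k$-th collision point $x(0)+t_{2k}v(0)$ on $T^{k}(S_{12}\cap\Spo)$, a line converging to a light-cone direction; therefore $p(t_{2k})=Q(x(0)+t_{2k}v(0))\to 0$, which forces $t_{2k}\to s_2$ (and symmetrically $t\to s_1$ backwards). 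If you want to keep your structure, you would need to supply an argument of comparable strength showing that the collision times cannot accumulate at any $\beta$ with $p(\beta)>0$; nothing in your proposal currently does this.
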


\begin{proof}
By Theorem \ref{thmCritGralColapso} we know that the signature of $B|_\Spo$ is $(1,1)$. 
Let $T=T_{12}\circ T_{23}|_\Spo$. 
From Proposition \ref{propBilMass} we see that $T_{12}|_\Spo$ and $T_{23}|_\Spo$ are reflections with determinant $-1$. 
Thus $\det(T)=1$. 
This implies that $T$ preserves the orientation of the plane $\Spo$. 
By Proposition \ref{propBilMass} we also know that $T_{12}$ and $T_{23}$ are $B$-isometries, 
which implies that $T$ preserves the positive and negative cones of $B$. Consequently, $T$ preserves each line $E=0$. 
From this, we see that $T$ is diagonalizable with real eigenvalues (two linearly independent and invariant lines are those with $E=0$).
Let us show that $T$ is not $\pm Id$ (the identity map of $\Spo$).

For three particles we have 
\[
 \left\{\begin{array}{l}
  \xi_1=\frac1M(-m_2-m_3,m_1,m_1)\\
  \xi_2=\frac1M(-m_3,-m_3,m_1+m_2)\\
  \end{array}
  \right.
  \left\{
  \begin{array}{l}
  n_{12}=(-1/m_1,1/m_2,0)\\
  n_{23}=(0,-1/m_2,1/m_3).
 \end{array}
 \right.
\]
We have that $\{\xi_1,n_{23}\}$ and $\{\xi_2,n_{12}\}$ are $B$-orthogonal basis of $\Spo$. 
If we define 
\[
A_1=\left(
\begin{array}{ll}
a & b\\
c & d
\end{array}
\right)
=
\left(
\begin{array}{ll}
m_1/(m_1+m_2) & M/m_3(m_1+m_2)\\
-m_1m_2/(m_1+m_2) & -m_1/(m_1+m_2)
\end{array}
\right)
\]
then 
\[
\left\{ 
\begin{array}{l}
  \xi_1=a\xi_2+cn_{12},\\
  n_{23}=b\xi_2+dn_{12},
 \end{array}
\right.
\]
It holds that $T$ in the basis $\{\xi_1,n_{23}\}$
is given by the matrix
\[
A_2=
\frac{1}{ad-bc}
\left(
\begin{array}{cc}
 -(ad+bc) & 2bd \\
 2ac & - (ad+bc) \\
 \end{array}
 \right).
\]
Which proves that $T\neq\pm Id$.
Thus, the eigenvalues of $T$ are $\lambda,\lambda^{-1}\in\R$, $|\lambda|\neq 1$.

%

From Lemma \ref{lemaDesarrollo} we have 
that if $t_n\leq t\leq t_{n+1}$ then
 \[
  T_1\circ\dots\circ T_n(x(t))=x(0)+tv(0).
 \]
For our particular case of 3 particles, we have that $T_1,\dots,T_n$ alternate between $T_{12}$ and $T_{23}$ (there are no more possible collisions). 
Therefore, if $n=2k$, then 
 \[
  T^k(x(t_{2k}))=x(0)+t_{2k}v(0).
 \]
Since $x(t_{2k})\in S_{12}$, we have that 
$x(0)+t_{2k}v(0)\in T^k(S_{12})$. 
If $l=\Spo\cap S_{12}$ then $x(0)+t_{2k}v(0)$ is the intersection of the lines $\{x(0)+tv(0):t\in\R\}$ and $T^k(l)$.
As $T$ is hyperbolic, we have that $T^k(l)$ converges to a direction with $E=0$. 
Thus, $Q(x(0)+t_{2k}v(0))\to 0$ as $k\to+\infty$.
This proves that $t_{2k}\to s_2$ as $k\to +\infty$. Analogously, considering the solution with initial velocity $-v(0)$ 
we have that the solution is defined for all $t\in (s_1,s_2)$. 
Finally, from Lemma \ref{lemBPosEnCono} we conclude that
$x_3(t)-x_1(t)\to 0$ as $t\to s_i$, $i=1,2$.
\end{proof}

\subsection{Toy gravitons}
\label{secToyG}
We say that $m_1,\dots,m_N$, $N\geq 3$, is a \emph{system of gravitons} if 
\begin{equation}
 \label{ecuGraviton}
 \left\{
 \begin{array}{l}
m_1,m_N>0>m_2,\dots,m_{N-1},\\
|m_2+\dots+m_{N-1}|<\min\{m_1,m_N\}. 
 \end{array}
 \right.
\end{equation}
We assume that $x_1\leq\dots\leq x_N$.

\begin{rmk}
\label{rmkToyG}
Toy gravitons satisfy \eqref{ecuGasNeg}, and consequently, Theorem \ref{thmCritGralColapso}. 
In fact, a system $m_1,\dots,m_N$ is a toy graviton if and only if it satisfies \eqref{ecuGasNeg} with $M>0$.
\end{rmk}

\begin{rmk}
The simplest example consists of three masses $m_1,m_2,m_3$ such that $m_1+m_2,m_2+m_3>0>m_2$. 
%
\end{rmk}

At each collision $1,2$ we have that the velocity of $m_1$ increases while 
at a collision $N-1,N$ the velocity of $m_N$ 
decreases (but its modulus increases, pointing to the left). 
This explains why we can have infinitely many collisions in a finite time interval.

\begin{figure}[h]
\includegraphics{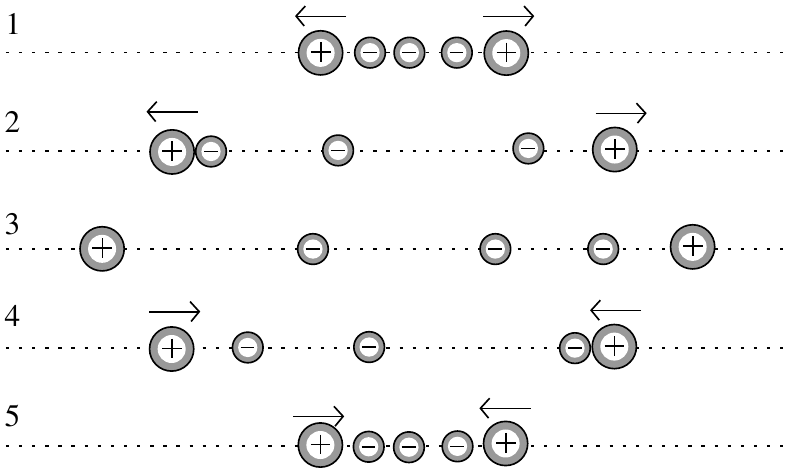}
\caption{A toy graviton system collapsing two positive masses.}
\label{figToyGraviton}
\end{figure}

\subsection{Compressors} 
\label{secCompr}
A \emph{compressor} is a system of $N\geq 3$ particles such that 
$x_1\leq\dots\leq x_N$ and 
\begin{equation}
\label{ecuRunaway}
\left\{
\begin{array}{l}
m_1>0>m_2,\dots,m_N,\\
m_1+\dots+m_{N-1}>0,\\
\sum_{i=1}^Nm_i<0.   
\end{array}
\right. 
\end{equation}

\begin{rmk}
\label{rmkRunaway}
Compressor systems satisfy \eqref{ecuGasNeg}, and consequently, Theorem \ref{thmCritGralColapso}. 
In fact, a system $m_1,\dots,m_N$ is a compressor if and only if it satisfies 
\eqref{ecuGasNeg} with $M<0$.
\end{rmk}

The next result means that the center of mass of a compressor system is at the right of $x_N$. 
We assume that the center of mass is $0$. 

\begin{prop}
\label{propCompCentMass}
For a compressor it holds that $x_i\leq 0$ for all $i=1,\dots,N$.
\end{prop}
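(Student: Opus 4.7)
My plan is to reduce the statement to showing $x_N\leq 0$, since by hypothesis $x_1\leq\dots\leq x_N$. The key is to combine the center of mass condition with the sign constraints built into \eqref{ecuRunaway}.

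Since the center of mass sits at the origin, $\sum_{i=1}^N m_i x_i = 0$. I would use $x_N$ as a pivot: writing $x_i = x_N - (x_N-x_i)$ and summing,
\[
0 = \sum_{i=1}^N m_i x_i = M x_N - \sum_{i=1}^N m_i(x_N - x_i),
\]
so $M x_N = \sum_{i=1}^N m_i(x_N - x_i)$. The $i=N$ summand vanishes, leaving
\[
M x_N = m_1 (x_N - x_1) + \sum_{i=2}^{N-1} m_i (x_N - x_i).
\]

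Now I would exploit the ordering and the signs of the masses. For $2\leq i \leq N-1$ we have $m_i<0$ and $0 \leq x_N - x_i \leq x_N - x_1$; multiplying the latter chain by the negative scalar $m_i$ reverses the inequality, giving $m_i(x_N-x_i) \geq m_i(x_N - x_1)$. Summing these estimates and adding $m_1(x_N - x_1)$ on both sides yields
\[
M x_N \;\geq\; \bigl(m_1 + m_2 + \dots + m_{N-1}\bigr)(x_N - x_1) \;\geq\; 0,
\]
because \eqref{ecuRunaway} supplies $m_1+\dots+m_{N-1}>0$ and the ordering gives $x_N - x_1 \geq 0$. Since \eqref{ecuRunaway} also gives $M<0$, dividing forces $x_N \leq 0$, whence $x_i \leq x_N \leq 0$ for every $i$.

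The argument is a short calculation once the pivot is chosen correctly, so I do not expect a real obstacle; the only place that deserves care is the inequality flip when multiplying by the negative $m_i$. Notice that the condition $m_1+\dots+m_{N-1}>0$ (rather than merely $m_1>0$) is exactly what is needed to make the lower bound non-negative, so the full strength of \eqref{ecuRunaway} is used.
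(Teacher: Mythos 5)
Your proof is correct. It uses the same raw ingredients as the paper's argument (the ordering $x_1\leq\dots\leq x_N$, the negativity of $m_2,\dots,m_N$, the partial-sum condition $m_1+\dots+m_{N-1}>0$, $M<0$, and the vanishing of the center of mass), but the decomposition is genuinely different: the paper argues by contradiction, splitting $\sum_i m_ix_i$ at the first index $j$ with $x_j>0$ and bounding the two pieces separately (with a separate treatment of the case $j=1$), whereas you rewrite the center-of-mass identity around the pivot $x_N$ to get the direct inequality $Mx_N\geq(m_1+\dots+m_{N-1})(x_N-x_1)\geq 0$ and divide by $M<0$. Your version buys a cleaner, case-free argument that isolates exactly where the hypothesis $m_1+\dots+m_{N-1}>0$ enters; the paper's version makes slightly more transparent the physical picture that no particle can sit to the right of the center of mass. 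Both are valid, and the inequality flips when multiplying by the negative $m_i$ are handled correctly in your write-up.
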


\begin{proof}
Arguing by contradiction, suppose that $x_j> 0$ for some $j=1,\dots,N$. 
First we consider the case $j>1$ and $x_{j-1}\leq0$.
Since $m_i<0$ for all $i\geq 2$ we have that 
$\sum_{i=j}^N m_ix_i< 0$. 
From \eqref{ecuRunaway} we also have
\[
 \sum_{i=1}^{j-1}m_ix_i\leq x_1\sum_{i=1}^{j-1}m_i\leq 0
\]
which contradicts that the center of mass is $0$. 
For $j=1$ we have $x_1>0$ which gives the contradiction 
$$\sum_{i=1}^Nm_ix_i\leq x_1\sum_{i=1}^Nm_i<0.$$
This finishes the proof.
\end{proof}

The following result is a direct explanation of the collapse of a compressor.

\begin{prop}
For a compressor, if $v_1(0)>0$ then the solution collapses.
\end{prop}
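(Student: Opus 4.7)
The plan is to exploit Proposition \ref{propCompCentMass} together with the sign of $\mu_{12}$, which for a compressor is negative.

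First I would show that $v_1(t)$ is monotone non-decreasing along the trajectory. Indeed, $v_1$ only changes at a collision with particle $2$ (its sole neighbour), and such a collision requires $v_1 > v_2$. Since the compressor hypotheses $m_2 < 0$ and $m_1 + m_2 > 0$ give $\mu_{12} = 2m_2/(m_1+m_2) < 0$, the collision law \eqref{leychoque} yields $w_1 - v_1 = \mu_{12}(v_2 - v_1) > 0$. Hence $v_1(t) \geq v_1(0) > 0$ throughout the maximal interval of existence.

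Next I would argue by contradiction that the forward maximal time is finite. Suppose the solution extends to all $t \geq 0$ without collapsing. Integrating the bound just obtained gives $x_1(t) \geq x_1(0) + t\, v_1(0) \to +\infty$; but by Proposition \ref{propCompCentMass}, applied at every $t$ (the ordering $x_1 \leq \dots \leq x_N$ and the position of the centre of mass being preserved by the dynamics), $x_1(t) \leq x_N(t) \leq 0$ for all $t$. This is a contradiction, so the forward maximal time $t^{*}$ is finite.

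To identify $t^{*}$ as a collapse time I would invoke the Proposition of \S\ref{secMaxSol}: at $t^{*}$ one has either an indirect multiple collision or an unbounded velocity. An indirect multiple collision already implies infinitely many collisions accumulating at $t^{*}$, i.e.\ a collapse. Moreover, each adjacent pair $(i,i+1)$ in a compressor satisfies $m_i + m_{i+1} \neq 0$ (positive for $i=1$, negative for $i \geq 2$), so no single collision can produce an infinite velocity; any blow-up at a finite $t^{*}$ must again come from infinitely many collisions accumulating there. The main obstacle is really just this bookkeeping at the maximal-time endpoint, which relies on the standing convention of \S\ref{secMaxSol} excluding direct multiple collisions; once that is granted, the whole proof is just the monotonicity of $v_1$ (from $\mu_{12} < 0$) combined with the uniform bound $x_1(t) \leq 0$ supplied by Proposition \ref{propCompCentMass}.
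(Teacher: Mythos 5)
Your proof is correct and follows essentially the same route as the paper's: the key observations are that $\mu_{12}<0$ forces $v_1$ to increase at every collision, so $x_1(t)\geq x_1(0)+t\,v_1(0)$, which is incompatible with the bound $x_1(t)\leq 0$ from Proposition \ref{propCompCentMass} unless the maximal time is finite. Your extra bookkeeping at the maximal-time endpoint (invoking the dichotomy of \S\ref{secMaxSol} to conclude that the finite blow-up time is an accumulation of collisions) is a welcome elaboration of a step the paper leaves implicit.
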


\begin{proof}
As $\mu_{12}<0$, at each collision between $m_1$ and $m_2$, the particle $m_1$ is accelerated. As it starts with positive velocity, if there were only finitely many collisions, 
the particle $m_1$ would cross $0$, 
contradicting Proposition \ref{propCompCentMass}. 
Thus, the solution has infinitely many collisions in finite time. 
\end{proof}

From this proof we see that $m_1$ compresses the gas formed by the particles of negative mass.
See Figure \ref{figCompressor}.

\begin{figure}[h]
 \includegraphics{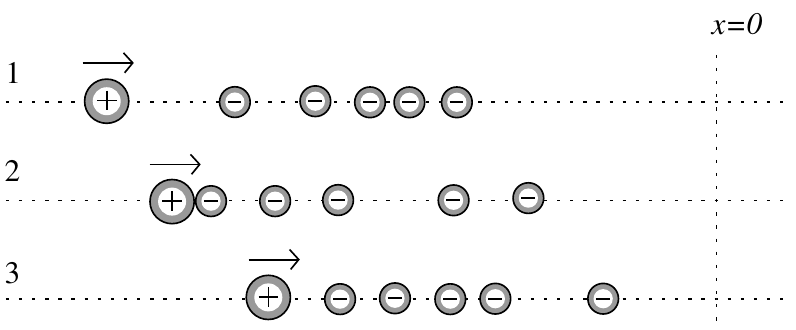}
 \caption{The gas formed by the particles of negative mass is compressed by $m_1$.}
 \label{figCompressor}
\end{figure}



%

\subsection{Systems with finitely many collisions}


In the next result \emph{every solution} means solutions with any initial condition (velocities and positions), in particular we will 
not assume that the particles start in order $x_1\leq,\dots,\leq x_N$, any initial order is allowed.

\begin{thm}
\label{thmSgBrS}
For a system of $N\geq 3$ particles on the line the following statements are equivalent: 
\begin{enumerate}
 \item[(a)] $B|_\Spo$ is definite,
 \item[(b)] $M\neq 0$ and every solution has finitely many collisions. 
\end{enumerate}
\end{thm}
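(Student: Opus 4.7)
For the implication $(a) \Rightarrow (b)$, assume $B|_\Spo$ is definite. Then $M \neq 0$ by Proposition \ref{propSignPosta}, since otherwise $B|_\Spo$ would be degenerate. By Corollary \ref{corSgBrS1}, the system either has all masses of one sign, or exactly one mass sharing the sign of $M$; in both situations Corollary \ref{corFinColUno} guarantees that every solution has finitely many collisions.

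For the reverse implication I would argue by contrapositive. Suppose $B|_\Spo$ is not definite. If $M = 0$ there is nothing to prove, so assume $M \neq 0$; then Proposition \ref{propSignPosta} forces $B|_\Spo$ to be non-degenerate and hence indefinite, and Corollary \ref{corSgBrS1} tells us that at least two masses share the sign of $M$ while at least one has the opposite sign. Using the global mass-rescaling invariance noted after Figure \ref{figReboteLoco}, I may assume $M > 0$ and list the positive masses as $a_1 \geq \dots \geq a_p > 0$ (with $p \geq 2$) and the magnitudes of the negative masses as $b_1 \geq \dots \geq b_q > 0$ (with $q \geq 1$).

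My plan is to isolate a subsystem that, possibly after a global sign flip and a suitable arrangement on the line, satisfies condition \eqref{ecuGasNeg}, so that Theorem \ref{thmCritGralColapso} supplies a solution of that subsystem with infinitely many collisions. I would split into two cases. Case~1: some negative magnitude $b_k$ is strictly less than two of the positives, say $b_k<\min(a_i,a_j)$; then $(a_i,-b_k,a_j)$ arranged in that order on the line is a toy graviton as defined in \S\ref{secToyG}, giving a collapsing subsystem directly via Theorem \ref{thmCritGralColapso}. Case~2: every negative magnitude satisfies $b_k\geq a_2$; starting from the smallest negative $b_q$ and adjoining the positives in increasing order of magnitude $a_p,a_{p-1},\dots$, let $r$ be the first index for which $a_p+\dots+a_{p-r+1}$ strictly exceeds $b_q$. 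Such $r$ exists because $\sum a_i>\sum b_j\geq b_q$, and $r\geq 2$ because $a_p\leq a_2\leq b_q$. The $(r+1)$-particle tuple $(-b_q,a_p,\dots,a_{p-r+1})$, after a global sign flip, becomes $(b_q,-a_p,\dots,-a_{p-r+1})$, which satisfies the compressor conditions of \S\ref{secCompr}; the dynamics is unchanged by the sign flip (Remark after Figure \ref{figReboteLoco}), so the original subsystem has the same collapsing trajectories.

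It remains to lift a subsystem collapse to a solution of the full $N$-particle system. Theorem \ref{thmCritGralColapso} supplies a non-trivial subsystem solution that undergoes a collapse in finite time $t_\ast$, during which all subsystem particles stay within a bounded region around their centre of mass. I would place the remaining particles far from that region and assign them outward velocities chosen so that no contact with the active subsystem occurs during $[0, t_\ast]$; the resulting trajectory of the full system then has infinitely many collisions, contradicting (b). The main obstacle I foresee is the combinatorial step in Case~2, particularly the degenerate boundary situations where the strict inequalities of \eqref{ecuGasNeg} collapse to equalities, for instance when some partial sum of positives equals $b_q$ exactly; those non-generic configurations require either selecting a different negative mass when $q\geq 2$, or a small perturbation argument combined with an extension of Theorem \ref{thmCritGralColapso} to the closure of condition \eqref{ecuGasNeg}.
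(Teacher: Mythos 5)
Your overall strategy coincides with the paper's: the direction (a)$\Rightarrow$(b) is word for word the paper's argument, and for (b)$\Rightarrow$(a) the paper likewise produces a toy-graviton or compressor sub-collection and invokes its collapsing solutions. The case decomposition differs in detail: the paper splits on ``at least two masses of each sign'' (where the sign of $m_2+m_3$ always yields a toy graviton, possibly after a global sign flip) versus ``exactly one negative mass'' (toy graviton if $m_1+m_2>0$, otherwise the bare assertion that the system ``contains a compressor''), whereas you split on whether some negative magnitude is smaller than two positives. Both dichotomies are exhaustive, and your explicit treatment of the lifting step --- placing the spectator particles far away so the subsystem collapse survives in the full system --- is a point the paper leaves entirely implicit; the boundedness of the active particles that this requires does follow from Lemma \ref{lemBPosEnCono} together with \eqref{ecuEvoB}, though you should say so.

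The genuine gap is the degenerate case you flag at the end of Case 2, and neither of your proposed repairs closes it. A perturbation of the masses is not available: the masses are the data of the theorem, and statement (b) is a hypothesis about that fixed system, not about nearby ones. Switching to a different negative mass also fails, for instance when all negatives are equal. Worse, the gap is not merely an artifact of your greedy ordering: for the system $(-2,1,1,1)$ one has $M=1\neq 0$ and $B|_\Spo$ of signature $(2,1)$, yet no sub-collection is a toy graviton (the negative magnitude $2$ exceeds every positive) and no sub-collection is a compressor, since with the flipped masses $(2,-1,\dots,-1)$ and $j$ gas particles the conditions $2-(j-1)>0$ and $2-j<0$ of \eqref{ecuRunaway} have no integer solution. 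So the framework as literally stated cannot conclude for this system, and one genuinely needs either an extension of Theorem \ref{thmCritGralColapso} to the closure of condition \eqref{ecuGasNeg} or a different collapsing mechanism. You should be aware that the paper's own proof suffers from exactly the same defect --- its claim that $m_1,\dots,m_N$ ``contains a compressor'' when $m_1+m_2<0$ is false for this example --- so your write-up is no less rigorous than the original and is more candid about where the argument is incomplete; but as a self-contained proof it does not yet establish the theorem.
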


\begin{proof}
By Proposition \ref{propSignPosta} we know that (a) implies that $M\neq 0$.
The fact that (a) implies finitely many collisions is a combination of Corollaries \ref{corSgBrS1} and \ref{corFinColUno}.

To prove that (b) implies (a) we will apply 
Corollary \ref{corSgBrS1}. 
Assume that $M>0$.
Arguing by contradiction, 
suppose that there are 
at least two positive masses and one negative.
Suppose that we have two particles with each sign: $m_1\leq m_2<0<m_3\leq m_4$. 
If $m_2+m_3>0$ then $m_3,m_2,m_4$ make a toy graviton.
If $m_2+m_3<0$ then $m_1,m_3,m_2$ make a toy graviton. 
By Remark \ref{rmkToyG} toy gravitons have solutions with infinitely many collisions.
Now assume that there is just one negative mass.
Suppose that $m_1<0<m_2\leq\dots\leq m_N$.
If $m_1+m_2>0$ then $m_2,m_1,m_3$ make a toy graviton. 
If $m_1+m_2<0$ then $m_1,\dots,m_N$ contains a compressor. By Remark \ref{rmkRunaway}, compressors have solutions with 
infinitely many collisions. Thus, in any case we arrived to a contradiction and the proof ends.
\end{proof}

\section{Potential energy}
\label{secGravity}
Consider a graviton system of 3 particles with the following initial conditions: 
\[
 \begin{array}{l}
  x_1(0)<x_2(0)=0< x_3(0),\\
  m_1x_1(0)+m_3x_3(0)=0,\\
  v_1(0)=v_3(0)=0.
 \end{array}
\]
We assume that $m_1,m_3>0$ are fixed and $m_2<0$ is variable.
We will take the limit $m_2\to 0$ keeping the kinetic energy of particle constant, 
$\frac 12 m_2v_2^2(0)=U_0<0$, which implies that
$v_2(0)\to\infty$. 

The kinetic energy of the system is $E=U_0$ (since $v_1(0)=v_2(0)=0$).
Let $I_0=m_1x_1^2(0)+m_3x_3^2(0)$ (recall that $x_2(0)=0$). 
From \eqref{ecuEvoB} we have
\[
 I(x(t))=I(x(0))+2tB(x(0),v(0))+2t^2E
\]
and 
\begin{equation}
 \label{ecuInerGrav}
 m_1x_1^2(t)+m_2x_2^2(t)+m_3x_3^2(t)=I_0+2t^2E
\end{equation}
because $B(x(0),v(0))=0$.
Let $\pm s$ be the roots of $I_0+2t^2E=0$.
By Theorem \ref{thmCritGralColapsoDos}, independently of the value of $m_2$ the solution $x(t)$ is defined for all 
$t\in (-s,s)$.
The center of mass starts at the origin and the momentum is 
$P=m_2v_2(0)$ (because $v_1(0)=v_3(0)=0$). Therefore 
\begin{equation}
\label{ecuCentMassGrav}
 m_1x_1(t)+m_2x_2(t)+m_3x_3(t)=tm_2v_2(0).
\end{equation}

Notice that $m_2v_2(0)=m_2v_2^2(0)/v_2(0)\to 0$ as $m_2\to 0$.
Let $r_0=x_3(0)-x_1(0)$. 
Since $m_1,m_3$ start at rest, we have that $|x_2(t)|\leq r_0$ for all $t\in(-s,s)$ and all $m_2$.
Then $m_2x_2(t)\to 0$ and 
$m_2x^2_2(t)\to 0$
as $m_2\to 0$.
Consequently if we take the limit $m_2\to 0$ in 
\eqref{ecuInerGrav} and \eqref{ecuCentMassGrav} 
we obtain 
\[
 \begin{array}{l}
  \lim_{m_2\to 0}m_1x_1^2(t)+m_3x_3^2(t)=I_0+2t^2E\\
  \lim_{m_2\to 0}m_1x_1(t)+m_3x_3(t)=0
 \end{array}
\]
for all $t\in (-s,s)$.
These limits make sense as $x_1(t),x_3(t)$ depend on $m_2$. 
Therefore, each limit $\hat x_i(t)=\lim_{m_2\to 0} x_i(t)$ 
exists
for fixed $t\in(-s,s)$, $i=1,3$ and they satisfy 
\begin{equation}
 \label{ecuInCent}
\begin{array}{l}
  m_1\hat x_1^2(t)+m_3\hat x_3^2(t)=I_0+2t^2E\\
  m_1\hat x_1(t)+m_3\hat x_3(t)=0
 \end{array}
 \end{equation}
Note that $\hat x_1(t)$ and $\hat x_3(t)$ are smooth, 
and denote as $\hat v_i(t)$ their velocities, $t\in(-s,s)$.
Let $T=\frac12m_1\hat v_1^2+\frac12m_3\hat v_3^2$ be the kinetic energy of the limit particles. 
Define $r=\hat x_3-\hat x_1$ 
and the potential energy
$U=\dfrac{U_0r_0^2}{r^2}$.

\begin{thm}
\label{thmGravLim}
For the limit system
it holds that
$$T+U=E$$ 
is a constant of motion. 
Therefore, $U$ is the limit of the kinetic energy of the toy graviton.
\end{thm}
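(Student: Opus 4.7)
The plan is to reduce the two-body limit to a single ODE in the relative coordinate $r=\hat x_3-\hat x_1$ using the constraints already established in \eqref{ecuInCent}, and then identify the resulting conserved quantity with $T+U$. First I would solve the linear system \eqref{ecuInCent}. The second equation gives $\hat x_1=-(m_3/m_1)\hat x_3$, whence $r=\hat x_3(m_1+m_3)/m_1$; substituting this into the first equation of \eqref{ecuInCent} yields
\[
r^2=\frac{m_1+m_3}{m_1 m_3}\bigl(I_0+2t^2 E\bigr).
\]
Using the initial conditions $m_1 x_1(0)+m_3 x_3(0)=0$ and $r_0=x_3(0)-x_1(0)$, the same algebra applied at $t=0$ gives $I_0=\mu\, r_0^2$ where $\mu:=m_1 m_3/(m_1+m_3)$ is the reduced mass. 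Hence
\[
r^2=r_0^2+\frac{2E}{\mu}t^2.
\]

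Next I would pass from the individual kinetic energies to the relative one. Differentiating the second equation of \eqref{ecuInCent} gives $m_1\hat v_1+m_3\hat v_3=0$; together with $\dot r=\hat v_3-\hat v_1$ this yields $\hat v_3=m_1\dot r/(m_1+m_3)$ and $\hat v_1=-m_3\dot r/(m_1+m_3)$, so a direct computation produces the familiar identity
\[
T=\tfrac12 m_1\hat v_1^2+\tfrac12 m_3\hat v_3^2=\tfrac12\mu\,\dot r^2.
\]

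Finally, differentiating the formula for $r^2$ gives $r\dot r=2Et/\mu$, so $\dot r^2=4E^2 t^2/(\mu^2 r^2)$, and since $2Et^2/\mu=r^2-r_0^2$ we obtain
\[
T=\tfrac12\mu\dot r^2=\frac{2E^2 t^2}{\mu r^2}=\frac{E(r^2-r_0^2)}{r^2}=E-\frac{E r_0^2}{r^2}.
\]
Because the initial kinetic energy of the system is $E=U_0$, the last term equals $U=U_0 r_0^2/r^2$, hence $T+U=E$, as claimed.

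There is no real obstacle: every step uses only \eqref{ecuInCent}, its derivative, and elementary algebra. The one thing to check carefully is the identification $I_0=\mu r_0^2$, which is what makes the constant term in $r^2$ match $r_0^2$; once this is in place, recognizing $\frac12\mu\dot r^2$ as the full kinetic energy via the vanishing of the (limit) momentum closes the argument.
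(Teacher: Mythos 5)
Your proposal is correct and follows essentially the same route as the paper: both reduce to the relative coordinate $r$ via the vanishing limit momentum, identify $T=\tfrac12\mu\dot r^2$ with the reduced mass $\mu=m_1m_3/(m_1+m_3)$ (the paper's $a=\mu/2$), and derive $r^2=r_0^2+2Et^2/\mu$ from \eqref{ecuInCent} before the same algebraic cancellation. Your explicit check that $I_0=\mu r_0^2$ is a detail the paper leaves implicit, but the argument is identical in substance.
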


\begin{proof}
Since $m_1\hat x_1(t)+m_3\hat x_3(t)=0$ 
we have that $\hat x_1=-m_3 r /(m_1+m_3)$ and 
$\hat x_3=m_1 r /(m_1+m_3)$. 
Defining $a=\dfrac{m_1m_3}{2(m_1+m_3)}$, 
by \eqref{ecuInCent} 
 we conclude 
$r^2=r_0^2+t^2E/a$
and 
$r\dot r=tE/a$. 
Since, $\hat v_3=-m_1\hat v_1/m_3$ we have $T=a\dot r^2$. 
Then 
\[
 \begin{array}{ll}
  T+U & = a\dot r^2+ \dfrac{U_0r_0^2}{r^2} = a\left(\dfrac{tE}{ar}\right)^2+ \dfrac{Er_0^2}{r^2}\\
 & = \dfrac{E}{r^2}\left(\dfrac{t^2 E}{a}+r_0^2\right) = \dfrac{E}{r^2}r^2=E
 \end{array}
\]
Which proves the result.
\end{proof}

This result allows us to
understand the negative potential energy $U(r)$ as the kinetic energy of the limit toy graviton. 
It would be interesting to find a modification of our model in order to obtain, 
for instance, a Newtonian gravitational potential $V(r)=-k/r$.

\begin{bibdiv}
\begin{biblist}


\bib{Bon57}{article}{
author={H. Bondi}, 
title={Negative Mass in General Relativity},
journal={Reviews of Modern Physics},
volume={29},
year={1957},
pages={423--428}}

\bib{Bonnor89}{article}{
author={W.B. Bonnor},
title={Negative Mass in General Relativity},
journal={General Relativity and Gravitation}, 
volume={21}, 
year={1989}}
%
\bib{ChMa}{book}{
author={N. Chernov},
author={R. Markarian},
title={Chaotic Billiards},
year={2006},
publisher={American Mathematical Society}}






\bib{Ga78}{article}{
author={G.A. Galperin}, 
title={Elastic collisions of particles on a line}, 
volume={33},
year={1978}, 
pages={149--150},
journal={Russian Math. Surveys}}


\bib{GlMi}{article}{
author={S.L. Glashow},
author={L. Mittag},
title={Three Rods on a Ring and the Triangular Billiard},
journal={Journal of Statistical Physics},
volume={87},
number={3/4},
pages={937--941},
year={1997}
}

\bib{HoKu}{book}{
author={K. Hoffman},
author={R. Kunze},
title={Linear Algebra},
year={1971},
publisher={Prentice-Hall, Inc. Englewood Cliffs, New Jersey}}




\bib{KT}{article}{
title={Pseudo-Riemannian geodesics and billiards},
author={B. Khesin},
author={S. Tabachnikov},
journal={Advances in Mathematics},
volume={221},
year={2009}, 
pages={1364--1396}}

\bib{Lut51}{article}{
author={J.M. Luttinger},
year={1951},
title={On Negative mass in the theory of gravitation},
journal={Awards for Essays on Gravitation. Gravity Research Foundation.}}







\bib{Sch97}{book}{
author={J.W. Schutz},
title={Independent axioms for Minkowski space-time},
publisher={Addison Wesley Longman Limited},
year={1997}}

\bib{Sch08}{article}{
author={R.E. Schwartz}, 
title={Obtuse Triangular Billiards II: 100 Degrees Worth of Periodic Trajectories},
journal={Journal of Experimental Mathematics}, 
volume={18},
year={2008}, 
pages={137--171}}

\bib{Si73}{book}{
author={Ya.G. Sinai}, 
title={Introduction to ergodic theory}, 
publisher={Princeton University Press},
year={1977}}

\bib{Si78}{article}{
author={Ya.G. Sinai}, 
year={1978},
title={Billiard trajeclories in a polyhedral angle},
journal={Russ. Math. Surv.},
volume={33},
pages={219--220}}

\bib{Ta}{article}{
author={S. Tabachnikov},
title={Introducing projective billiards},
journal={Ergod. Theory and Dynam. Syst.},
volume={17},
year={1997},
pages={957--976}}




\end{biblist}
\end{bibdiv}

\end{document}